\theoremstyle{plain}
\newtheorem{theorem}{Theorem}[section]
\newtheorem{lemma}[theorem]{Lemma}
\theoremstyle{definition}
\newtheorem{definition}[theorem]{Definition}
\newcommand{\R}{\mathbb{R}}
\newcommand{\N}{\mathbb{N}}
\numberwithin{equation}{section}
\title{A uniqueness result on detecting a prey in a spider orb-web}
\author{Alexandre Kawano\thanks{Escola Politecnica,  University of S\~ao Paulo, S\~ao Paulo, Brazil. E-mail:
\textsf{akawano@usp.br}} \and Antonino Morassi\thanks{Polytechnic
Department of Engineering and Architecture, Universit\`a degli
Studi di Udine, Udine, Italy. E-mail:
\textsf{antonino.morassi@uniud.it}}  \ \ -- \ \ 
May 16, 2019  }
\begin{document}

\maketitle

\begin{abstract}
We consider the inverse problem of localizing a prey hitting a
spider orb-web from dynamic measurements taken near the center
of the web, where the spider is supposed to stay. The actual
discrete orb-web, formed by a finite number of radial and
circumferential threads, is modelled as a continuous membrane. The
membrane has a specific fibrous structure, which is inherited from
the original discrete web, and it is subject to tensile pre-stress
in the referential configuration. The transverse load describing
the prey's impact is assumed of the form $g(t)f(x)$, where $g(t)$
is a known function of time and $f(x)$ is the unknown term
depending on the position variable $x$. For axially-symmetric
orb-webs supported at the boundary and undergoing infinitesimal
transverse deformations, we prove a uniqueness result for $f(x)$
in terms of measurements of the transverse dynamic displacement taken
on an arbitrarily small and thin ring centered at the origin of
the web, for a sufficiently large interval of time.
\end{abstract}

\bigskip

\textit{Keywords}: inverse problems, identification of sources,
spider orb-web model, membrane, infinitesimal vibration.

\section{Introduction}
\label{Intro}

The study of the spider orb-web as biological-mechanical system
has received increasing interest in the last decades. The primary
function of an orb-web is for sensory information aimed at
catching prey, and the analysis of the mechanisms guiding the
spider in prey capture has been – and actually is – one of the
main goals of the scientific investigation in the field. In fact,
even if vision obviously is useful for the detection and
localization of prey, in the majority of spiders prey generated
vibrations are the more important guide \cite{Klarner1982}. In
this paper we consider the inverse problem of localizing a
transversally impacting prey on an orb-web by assuming that the
spider is able to measure the dynamic response in a neighborhood
of the central hub of the web.

In order to better illustrate the motivations of the problem, a
brief overview of the prey detection problem based on dynamic data
is sketched below. The study of prey catching
behavior from the sensory point of view was initiated by Klarner
and Barth \cite{Klarner1982}, see also Masters et al.
\cite{Masters1984} \cite{Masters1984a} for significant contributions on the dynamic
signal transmission through an orb-web. In \cite{Klarner1982}, the authors developed a
comprehensive study on the correlation between prey catching
behavior and vibratory signals generated by prey animals. Based on
one-dimensional models of vibration transmission, typically along
radial threads, it was concluded that the threshold displacement
necessary to elicit predatory behavior decreases with increasing
stimulus frequency, and that airborne vibrations are neither
necessary nor sufficient to elicit the first steps of
prey-catching behavior. The correlation between vibration signals
and spider orientation in the first stage of catching was
investigated by Hergenroder and Barth \cite{Hergenroder1983}. The
experiments conducted by means of a vibrating platform supporting
the spider revealed that, in presence of a bilaterally symmetrical
stimulus combination, the spider turns toward the legs that are
stimulated sooner or with larger amplitude, and the turning angle
corresponds to the response to stimulation of only these legs. The
transmission of natural and artificial vibrations in webs was next
examined using laser Doppler vibrometry by Landolfa and Barth
\cite{Landolfa1996} to determine how the spider discriminates and
localizes stimuli. Using calculation and measurements of the
propagation velocities of waves, the authors provided estimates of
the time-of-arrival differences experienced by the spider placed
at the hub of the web. Again, as in Frolich and Buskirk
\cite{Frohlich1982}, one-dimensional taut string models were used
to calculate propagation velocities in radial threads. However,
Landolfa and Barth \cite{Landolfa1996} were among the first to
recognize that vibrations are not only confined in the stimulated
radius considered in experiments, but instead spread also
laterally across the web, depending on the structural
interconnectivity between the radial threads and the
circumferential threads belonging to the auxiliary or to the
permanent (catching) spiral. One of the merits of their work was
to highlight the limits of one-dimensional models of
interpretation, commonly used at that time, and to suggest the
need to move toward more realistic mechanical models, able to take
into account the two dimensional nature of the spider’s orb web
and its fibred structure.

Along this line of research, significant advances have been
obtained, even in recent years, by using finite element analysis
to deal with realistic web geometry, detailed material
description, and large dynamic deformations. Among other
contributions, we mention the analysis of the role of aerodynamic
drag in the dissipation of prey's energy and in reducing
deterioration of the orb web by Zaera, Soler and Teus
\cite{Zaera2014}; the key effect of the secondary frame in
avoiding excessive stiffness in radial threads by Soler and Zaera
\cite{Soler2016}; the analysis of silk material properties and
propagation of vibrations within webs by Mortimer, Soler, Siviour, Zaera and Vollrath  \cite{Mortimer2016}; the ability of the spider to
adjust web pre-stress and dragline silk stiffness to tune both
transverse and longitudinal wave propagation in the web
\cite{Mortimer2016}. Recently, Otto, Elias and Hatton \cite{Otto2018} studied experimentally the dynamic response of artificial discrete spider webs, using frequency-based substructuring to determine frequency response functions of the system. All the numerical models developed in these investigations offer a remarkable
versatility and accuracy in reproducing the response of the web
under wind loads, prey impacts or other vibratory sources.
However, theoretical models often allow for a deeper insight in
the physical phenomena through the analysis of the underlying
mathematical structure of the governing equations and the
identification of the most relevant parameters that rule the
dynamic response of the web. As we will see in the present paper,
this is particularly relevant in approaching the main inverse
problem afflicting the spider, namely: how to localize a prey by
measuring prey’s induced vibrations in the orb-web?

A first two-dimensional dynamical model of spider orb-web was
proposed by Morassi, Soler and Zaera \cite{Morassi2017}. It is a continuous
mechanical model able to describe small deformations of a spider
orb-web. The actual discrete web, formed by a finite number of
radial and circumferential threads, is approximated by a
continuous elastic membrane on the assumption that the spacing
between threads is small enough. The continuous membrane turns out
to have a specific fibrous structure, which is inherited from the
original discrete web, and it is subject to tensile pre-stress in
the referential configuration. Although the model can be adapted
to reproduce general geometries, for simplicity here we restrict
the attention to circular-shaped webs supported at the boundary,
in which the circumferential threads belong to concentric circles.

Basing on the above mechanical model, in the present paper we
consider the inverse problem of localizing a prey hitting the
orb-web by dynamic response measurements taken near the center of
the web, where the spider is supposed to stay. Prey’s impact is
modelled by a surface force density, whose direction is orthogonal
to the web and intensity is equal to the product between a
function dependent on time, $g=g(t)$, and a function $f=f(x)$
depending only on the spatial position $x$. The function $g$ is
assumed to be known, whereas the function $f$ is the unknown of
the problem. Under the hypothesis of small transverse deformation
- approximation that can be considered acceptable in the interval
of time immediately following the impact - and under the
additional assumption $g(0)\neq0$, we prove the following
uniqueness result: the knowledge of the time-history of the
transverse displacement of the orb-web measured in a circular
annulus centered at the center of the web, having arbitrarily
small thickness and arbitrarily small inner radius, for a
sufficiently large registration time, is sufficient to univocally
determine the function $f$. We refer to Theorem
\ref{teo:Uniqueness} for a precise statement. It is interesting to
note that the minimal data set necessary to ensure uniqueness in
prey’s localization seems to reproduce well the real information
that the spider can collect immediately after the prey’s impact.
In fact, by continuously testing the web, the spider acquires the
dynamical response of the web approximately on a circle centered
at the origin of the web and with radius significantly small with
respect to the web dimensions.

The method used to prove uniqueness is mainly based on the series
representation of the dynamic response of the structure on its
eigenfunction basis, and on the analysis of the almost periodic
distribution that arises from it. This idea was firstly used by
Kawano in \cite{Kawano2013b} to prove uniqueness in the
determination of transverse loads in a vibrating rectangular
elastic plate, and it was recently adapted by Kawano and Morassi
\cite{KM2018} to deal with the analogous inverse problem in
multi-span beams and plates.

{}From the mathematical point of view, the inverse problem of
determining sources in vibrating systems has been the subject of
extensive investigation in the last two decades, and it is not
easy to draw a complete bibliographic overview on the topic. Here,
we limit ourselves to mention some contributions close to the
problem under examination, from which the interested reader can
certainly obtain more information. Yamamoto \cite{Yamamoto1995a}
considered the inverse problem of determining the spatial
component of a source term (e.g., analogous to our function
$f(x)$) in a hyperbolic equation with Laplacian spatial operator
from the observation of the normal derivative of the solution on a
portion of the supported boundary. For a sufficiently large
observation time and basing on exact boundary controllability
methodology, a stability estimate, a reconstruction formula and a
Tikhonov regularization were provided. These results were also
extended to more general hyperbolic equations, whose spatial
second-order operator is given in divergence form with variable
coefficients that, however, should satisfy some positivity
conditions stricter than uniform ellipticity. Best possible
conditional stability estimates for the inverse problem of
determining point waves sources in hyperbolic equations with
Laplacian spatial operator by lateral overdetermined observations
were determined by Komornik and Yamamoto \cite{Komornik2005a}.

As for internal measurements, it is worth to mention that Cheng,
Ding and Yamamoto \cite{Cheng2002} have proved a unique
continuation result for the wave equation in bounded domains
$\Omega\subset \mathbb{R}^n$, $n\leq 3$ under homogeneous
Dirichlet boundary conditions. Their result states that when the
system starts {}from the rest and the loading has the form
$g(t)f(x)$, with $g$ and $f$ regular enough, given any line
segments $\ell\subset L\subset \Omega$, there exists $T_0>0$,
$T_0$ only depending on $\ell$ and $L$, such that if the solution
$u$ of the wave equation vanishes on $\ell$ in the interval of
time $[0,T_0]$, then $f\equiv 0$ in $\Omega$.

{}From the above review it emerges that much of the theory
developed in this area has been concerned with the source
identification problem for homogeneous elastic membranes with
hydrostatic pre-stress state. The spider orb-web model considered
here does not fall within this class of membranes, since its
geometry naturally provides for a peculiar fibrous structure
having an intensification of the density of radial threads towards
the center of the web. In particular, a singularity in the
continuous model arises exactly at the center of the web, e.g.,
the stiffness and the inertia coefficient diverge at that point.
In addition, the presence of the spider in the hub forced to
include a not negligible point mass, of order of $50$ times the
entire mass of the web, at the center of the structure. As it will
be shown in Section \ref{sec:ForcedResponse}, these two peculiar
features of the orb-web characterize the dynamic behavior of the
whole structure.

The plan of the paper is as follows. In Section \ref{sec:Model} we
introduce the continuous membrane model of spider orb-web. The
formulation of the inverse problem and the main uniqueness result,
Theorem  \ref{teo:Uniqueness}, are presented in Section
\ref{sec:MainResult}. Section \ref{sec:ForcedResponse} is devoted
to the analysis of the forced dynamic response of the orb-web via
eigenfunction series representation. A proof of Theorem
\ref{teo:Uniqueness} is presented in Section \ref{sec:ProofMain}.

\section{An overview of a continuous membrane model of a spider orb-web}
\label{sec:Model}

In this section we briefly recall the continuous membrane model
introduced in \cite{Morassi2017} for describing the small
transverse deformation of a spider orb-web. We refer to the above
mentioned paper for more details.

The spider orb-web is considered as a network formed by two
intersecting families of threads which, in a referential
configuration $\mathcal{B}_{\mathcal{K}}$, coincide with radial
directions of a disk centered at the origin $O$ of a Cartesian
reference system (radial threads), and with coaxial circles
centered at $O$ (circumferential threads). No slippage between the
threads belonging to the two families is allowed, and the threads
are supposed to be close enough so that the cable net forms a
continuous structured membrane, in which the two families of
threads may be thought of as coordinate lines on the surface.

The referential placement of a point in the referential state
$\mathcal{B}_{\mathcal{K}}$ is denoted by
\begin{equation}
    \label{eq-1-1}
    \textbf{X}=\textbf{X}(\vartheta_1,
    \vartheta_2)=\vartheta_1(\cos \vartheta_2 \textbf{E}_1 + \sin
    \vartheta_2 \textbf{E}_2 ),
\end{equation}
where the polar coordinates are $\vartheta_1=\rho \in ]0,R]$,
$\vartheta_2 =\vartheta \in[0,2\pi]$, and $\{ \textbf{E}_1,
\textbf{E}_2, \textbf{E}_3=\textbf{E}_1 \times \textbf{E}_2 \}$ is
the canonical basis of $\R^3$, e.g., $\textbf{E}_i \cdot
\textbf{E}_j = \delta_{ij}$, $i,j=1,2,3$, where $\cdot$ is the
scalar product and $\times$ is the vectorial product in $\R^3$.

The actual position $\textbf{x}$ of a particle is given by
\begin{equation}
    \label{eq-2-1}
    \textbf{x}= \textbf{X}+\textbf{u}, \quad \textbf{u}= \sum_{\alpha=1}^2  u^\alpha\textbf{A}_\alpha + u^3
    \textbf{A}_3,
\end{equation}
where $\textbf{u} : \mathcal{B}_{\mathcal{K}} \rightarrow \R^3$ is
the displacement vector field and $
\textbf{A}_\alpha=\frac{\partial \textbf{X}}{\partial
\vartheta_\alpha}= \textbf{X},_\alpha$, $\alpha =1,2$,
$\textbf{A}_3= \textbf{A}_1 \times \textbf{A}_2/|\textbf{A}_1
\times \textbf{A}_2| $.

Our analysis will be developed under the assumption of
infinitesimal deformations, that is, $\textbf{u}$ and the gradient
of $\textbf{u}$, $\nabla \textbf{u}= \frac{\partial \textbf{u}
}{\partial \textbf{X}}$, are small, and high order quantities are
neglected.

Concerning the contact actions inside the orb-web, we assume that
the internal force on a material element taken along a coordinate
line is a tensile force acting in the tangent plane to the
membrane surface, and having direction coincident with the tangent
to the other coordinate curve passing through that element. The
controvariant components of the membrane stress tensor defined in
the actual configuration $\mathcal{B}$ are denoted by $N^{\beta
\alpha}$, with $N^{12}=N^{21}=0$. The equations of equilibrium can
be obtained by imposing the balance of linear momentum for any
portion of the actual configuration $\mathcal{B}$, using Cauchy's
Lemma and the Divergence Theorem. Under the assumption of smooth
tensor and vector fields, we have
\begin{equation}
    \label{eq-3-1}
    \sum_{\alpha=1}^2 N^{\gamma \alpha}|_\alpha + p^\gamma =0,
    \quad  \gamma=1,2, \quad \hbox{in  }
    \mathcal{B},
\end{equation}
\begin{equation}
    \label{eq-3-2}
    \sum_{\alpha, \beta=1}^2 N^{\beta \alpha}b_{\beta \alpha} + p^3 =0, \quad \hbox{in  }
    \mathcal{B},
\end{equation}
where
\begin{equation}
    \label{eq-3-3}
    N^{\gamma \alpha}|_\alpha = N^{\gamma \alpha},_\alpha +
    \sum_{\delta=1}^2  N^{\gamma \delta} \Gamma^{\alpha}_{\delta \alpha} + \sum_{\delta=1}^2 N^{\delta
    \alpha} \Gamma^{\gamma}_{\delta \alpha},
\end{equation}
\begin{equation}
    \label{eq-3-4}
    \Gamma^{\gamma}_{\alpha \beta} = \textbf{a}_{\alpha,\beta} \cdot
    \textbf{a}^\gamma,
    \quad
    b_{\beta \alpha} = \sum_{\gamma=1}^2 b_\alpha^\gamma a_{\gamma \beta},
\end{equation}
\begin{equation}
    \label{eq-3-5}
    a_{\gamma \beta} = \textbf{a}_{\gamma} \cdot \textbf{a}_\beta,
    \quad b_\alpha^\gamma = - \textbf{a}_{3, \alpha} \cdot
    \textbf{a}^\gamma.
\end{equation}
The vectors $\textbf{a}_\alpha$, $\alpha=1,2$, are defined as
$\textbf{a}_\alpha =  \frac{\partial \textbf{x} }{\partial
\vartheta^\alpha}$ and $\textbf{p}= \sum_{\alpha=1}^2 p^\alpha
\textbf{a}_\alpha + p^3
    \textbf{a}_3$ is the continuous surface force density field
    acting on the deformed membrane, inertia
forces included. Moreover, $\textbf{a}_3= \textbf{a}_1 \times
\textbf{a}_2 / |\textbf{a}_1 \times \textbf{a}_2|$ and
$\textbf{a}_{\alpha,\beta}= \frac{\partial \textbf{a}_\alpha
}{\partial \vartheta^\beta}$, $\alpha, \beta=1,2$,
$\textbf{a}_{3,\alpha}= \frac{\partial \textbf{a}_3 }{\partial
\vartheta^\alpha}$, $\alpha=1,2$. Hereinafter, the contravariant
basis $\{ \textbf{a}^1, \textbf{a}^2, \textbf{a}^3 \}$ in a point
of $\mathcal{B}$ is defined as $\textbf{a}^\alpha \cdot
\textbf{a}_\beta = \delta^\alpha_\beta$,
$\textbf{a}^3=\textbf{a}_3$, where $\delta^\alpha_\beta$ is the
Kronecker symbol.

The threads have vanishing shear/bending rigidity, and the
magnitude of the tensile force acting on the threads of the
$\alpha$-family is assumed to depend on the initial tensile
pre-stress and on the elongation in the direction of the $\alpha$
coordinate curve only. In case of linearly elastic material, the
no-vanishing membrane stress components in $\mathcal{B}$ are given
by
\begin{equation}
    \label{eq-4-1}
    N^{11} = d_1 ( {\overline{T}}_1 + \mathcal{A}_1 E_1 \epsilon_1) \frac
     { |a^{11}|^{   \frac{1}{2}}    }
     { |a_{11}|^{   \frac{1}{2}}      },
\end{equation}
\begin{equation}
    \label{eq-4-2}
      N^{22} = d_2 ( {\overline{T}}_2 + \mathcal{A}_2 E_2 \epsilon_2) \frac
     { |a^{22}|^{   \frac{1}{2}}    }
     { |a_{22}|^{   \frac{1}{2}}      },
\end{equation}
where $\mathcal{A}_\alpha$ is the area of the cross-section of a
single thread belonging to the $\alpha$th family and $E_\alpha
>0$ is the Young's modulus of the material. The quantity $\epsilon_\alpha$
is the elongation measure of the threads belonging to the
$\alpha$th family:
\begin{equation}
    \label{eq-4bis-1}
     \epsilon_\alpha = \frac{u_\alpha|_\alpha}{A_{\alpha \alpha}},
     \quad \alpha =1,2,
\end{equation}
where $A_{\alpha \alpha}=\textbf{A}_\alpha \cdot
\textbf{A}_\alpha$ and $u_\alpha|_\alpha = u_{\alpha,\alpha} -
\sum_{\delta=1}^2 \overline{\Gamma}_{\alpha \alpha}^\delta
u_\delta$, $\alpha=1,2$, $\overline{\Gamma}_{\alpha
\alpha}^\delta= \textbf{A}_{\alpha,\beta}\cdot \textbf{A}^\delta$
being the Christoffel symbols defined on
$\mathcal{B}_{\mathcal{K}}$. Moreover, ${\overline{T}}_\alpha>0$
is the tensile pre-stress force acting on a single thread
belonging to the $\alpha$th coordinate curve in the referential
configuration $\mathcal{B}_{\mathcal{K}}$. The quantity $d_\alpha$
expresses the number of threads per unit length crossing the
coordinate curve $\vartheta_\alpha=const$ in $\mathcal{B}$ and
having direction coinciding with the vector $\textbf{a}_\alpha$,
$\alpha=1,2$. Hereinafter, the radial and the circumferential
threads in $\mathcal{B}_{\mathcal{K}}$ are assumed to be equally
spaced both in the plane angle $2 \pi$ and along the radial
direction, respectively, e.g., $\overline{d}_1 =
\frac{C_\rho}{\rho}$ and $\overline{d}_2 =C_\vartheta$, where the
two positive constants $C_\rho$, $C_\vartheta$ are the number of
radial threads per unit plane angle and the number of
circumferential threads per unit length along the radial
direction, respectively. Within the approximation of infinitesimal
deformation, it it possible to prove that
\begin{equation}
    \label{eq-5-1}
     d_1 = \overline{d}_1 \left ( 1 - \left ( u^2,_2 +
     \frac{u^1}{\rho} \right ) \right ),
\end{equation}
\begin{equation}
    \label{eq-5-2}
     d_2 = \overline{d}_2 \left ( 1 - u^1,_1 \right ).
\end{equation}
By inserting \eqref{eq-5-1}, \eqref{eq-5-2} in \eqref{eq-4-1},
\eqref{eq-4-2}, and neglecting higher order terms, we obtain the
linearized constitutive equations of the membrane stresses
\begin{equation}
    \label{eq-5-3}
     N^{11} =  \overline{d}_1 \overline{T}_1 -  \overline{d}_1 \overline{T}_1
    \left ( 2 u^1,_1 + u^2,_2 + \frac{u^1}{\rho} \right ) + \overline{d}_1
    \mathcal{A}_1 E_1 u_1,_1 \ ,
\end{equation}
\begin{equation}
    \label{eq-5-4}
     N^{22} =  \frac{\overline{d}_2 \overline{T}_2}{\rho^2} - \frac{\overline{d}_2
     \overline{T}_2}{\rho^2} \left ( u^1,_1 + 2u^2,_2 +
     2\frac{u^1}{\rho}\right ) +
     \frac{\overline{d}_2 \mathcal{A}_2 E_2 (u_2,_2 +\rho u_1)  }{\rho^4}.
\end{equation}
We conclude this section with a comment on a key aspect of the
model, namely the definition of the pre-stress state of the orb
web. We first notice that the expressions of the membrane stresses
\eqref{eq-5-3}, \eqref{eq-5-4} reduce to pre-stress acting on the
referential configuration $\mathcal{B}_{\mathcal{K}}$ for
vanishing displacement field. More precisely, since the entries of
the second fundamental form of the web surface,
$\overline{b}_{\beta \alpha}$'s, evaluated on
$\mathcal{B}_{\mathcal{K}}$ vanish, and the loads are absent, the
equilibrium equation \eqref{eq-3-2} is identically satisfied, and
equations \eqref{eq-3-1} imply
\begin{equation}
    \label{eq-5bis-1}
     \overline{T}_\vartheta=\overline{T}_\vartheta(\rho), \quad
     \overline{T}_{\rho,\rho}= \xi \overline{T}_\vartheta(\rho),
\end{equation}
with $\xi = \frac{C_\vartheta}{C_\rho}$, where we have defined
$\overline{T}_\rho=\overline{T}_1$,
$\overline{T}_\vartheta=\overline{T}_2$. Hereinafter, we assume
\begin{equation}
    \label{eq-5bis-2}
     \overline{T}_\rho(\rho,\vartheta)|_{\rho=R}=\sigma=constant>0
\end{equation}
and, by symmetry,
\begin{equation}
    \label{eq-5bis-3}
     \overline{T}_\rho = \overline{T}_\rho(\rho).
\end{equation}
Now, we recall that the analysis of the initial pre-stress is
strictly connected with the process followed by spiders in
creating their webs, see \cite{Morassi2017} for details. In brief,
in the first stage of orb web construction, the web configuration
is called \textit{unfinished web}, and the experimental
observations by Wirth and Barth \cite{Wirth1992} support the
hypothesis of proportionality between the circumferential,
$\overline{T}_\vartheta$, and radial, $\overline{T}_\rho$,
pre-stress, namely $\overline{T}_\vartheta(\rho) = k
\overline{T}_\rho (\rho)$, with $ k>0$ constant. Assuming
axially-symmetric referential configuration and writing the
equilibrium equations on $\mathcal{B}_{\mathcal{K}}$, we obtain
\begin{equation}
    \label{eq-6-1}
    \overline{T}_\rho (\rho)=  \widehat{T} \exp (k \xi \rho ), \quad \rho \in [0,R],
\end{equation}
where the radial tensile pre-stress at the center of the web,
$\widehat{T}$, is strictly positive.

In this paper we shall be concerned with the subsequent stage of
the orb-web, the so-called \textit{finished web}, which is
obtained by the spider by removing the auxiliary spiral and
replacing it by the \textit{catching} - or \textit{sticky} -
spiral, which represent the final configuration of the orb-web for
prey catching. Uniform tensile pre-stress in the circumferential
threads of the finished web can be assumed in this stage (see
\cite{Morassi2017}), namely
\begin{equation}
    \label{eq-7-1}
    \overline{T}_\vartheta (\rho) = \mathcal{T} = constant>0, \quad \rho \in
    [0,R],
\end{equation}
which implies
\begin{equation}
    \label{eq-7-2}
    \overline{T}_\rho (\rho) =
    \widehat{T} + \xi \mathcal{T} \rho,
\end{equation}
where $\widehat{T}>0$.

We are now in position to write the equations governing the
infinitesimal transverse forced vibrations of the finished
orb-web, supported at the boundary $\rho=R$ and with a point mass
$M$ attached at $\rho=0$ to model the presence of the spider.
Here, it is implicitly assumed that the effect of the spider on
the stiffness properties of the orb-web is negligible.

By replacing the expressions \eqref{eq-4-1}, \eqref{eq-4-2} in
equation \eqref{eq-3-2}, after linearization, we obtain the
differential equation governing the motion
$U=u^3(\rho,\vartheta,t)$ of the membrane:
\begin{equation}
    \label{eq-7-3}
    \frac{C_\rho}{\rho} \overline{T}_\rho U,_{\rho\rho} +
    C_\vartheta \frac{\overline{T}_\vartheta  }{\rho^2}
    (U,_{\vartheta \vartheta} + \rho U,_\rho) - \widetilde{\gamma}(\rho) U,_ {tt}=- F(\rho,\vartheta,t),
\end{equation}
for every $(\vartheta,\rho,t)\in
[0,2\pi]\times]0,R[\times]0,T_0[$, with $T_0>0$. In the above
equation, the surface mass density $\widetilde{\gamma}$ is given
by
\begin{equation}
    \label{eq-7-4}
   \widetilde{\gamma} = \frac{C_\rho}{\rho}m_\rho + C_\vartheta m_\vartheta,
\end{equation}
where $m_\rho=constant>0$ and $m_\vartheta=constant>0$ is the mass
density per unit length of the radial and circumferential threads,
respectively. The function $F=F(\rho,\vartheta,t)$ is the
transverse force density per unit area, and the pre-stress tensile
state is assumed to satisfy \eqref{eq-7-1}, \eqref{eq-7-2}.

The boundary condition on the external support is
\begin{equation}
    \label{eq-8-1}
   U(\rho, \vartheta,t)|_{\rho=R}=0, \quad (\vartheta,t) \in
   [0,2\pi]\times[0,T_0],
\end{equation}
whereas at $\rho=0$ we provisionally require that $U$, the first
order partial derivatives of $U$, and $U,_{tt}$ are bounded. To
complete the mathematical description of the model, we shall
impose the equilibrium of the out-of-plane components of forces
acting on the small disc $B_\epsilon$, $B_\epsilon
=\{(\rho,\vartheta)| \ \rho < \epsilon, \ \vartheta \in [0,2\pi)\}
\subset \mathcal{B}_{\mathcal{K}}$, centered at the origin $O$ and
with radius $\epsilon$, for every $\epsilon$ small enough. Under
our assumptions, we have
\begin{equation}
    \label{eq-8-3}
    \int_{\partial B_\epsilon}
    \frac{C_\rho}{\rho} \overline{T}_\rho U,_{\rho}
    dS
    =
    \int_{B_\epsilon}
    \widetilde{\gamma}U,_{tt} dA +
    M U,_{tt}(\rho=0,t),
\end{equation}
that is, taking the limit as $\epsilon \rightarrow 0$,
\begin{equation}
    \label{eq-8-4}
    C_\rho \overline{T}_\rho(0) \lim_{\epsilon \rightarrow 0}
    \int_0^{2\pi} U,_{\rho}(\rho=\epsilon,\vartheta,t)d\vartheta =
    MU,_{tt}(\rho=0,t),
\end{equation}
for every $t \in (0,T_0)$.

\section{Inverse problem and main result}
\label{sec:MainResult}

Let us consider the infinitesimal transverse vibration
$U=U(\rho,\vartheta,t)$ of the finished spider orb-web under the
external transverse force $F=F(\rho, \vartheta,t)$. The orb-web
domain is a disk $B_R$ of radius $R$ centered at the origin $O$ of
a Cartesian-coordinate system of the plane. The boundary $\rho=R$
is supported and a concentrated mass $M>0$ is attached at $O$.
Under the hypotheses and notation of Section \ref{sec:Model}, and
assuming
\begin{equation}
    \label{eq-9-1}
    F(\rho, \vartheta, t) = g(t) f(\rho, \vartheta), \quad
    (\rho,\vartheta,t)\in [0,R]\times[0,2\pi]\times[0,T_0],
\end{equation}
the motion $U=U(\rho,\vartheta,t)$ of the orb-web is governed by
the following boundary value problem with initial data:
\begin{center}
\( {\displaystyle \left\{
\begin{array}{lr}
        \frac{C_\rho}{\rho} \overline{T}_\rho U,_{\rho\rho} +
        C_\vartheta \frac{\overline{T}_\vartheta  }{\rho^2}
        (U,_{\vartheta \vartheta} + \rho U,_\rho) - \widetilde{\gamma}(\rho) U,_ {tt}=- g(t)f(\rho,\vartheta),\\
        \quad\quad\quad\quad\quad\quad\quad\quad\quad\quad\quad\quad\quad\quad\quad\quad
    (\rho,\vartheta,t)\in ]0,R[\times[0,2\pi]\times]0,T_0[,
    \vspace{0.25em}\\
        U(\rho, \vartheta,t)|_{\rho=R}=0, \quad t\in [0,T_0],
        \vspace{0.25em}\\
        C_\rho \overline{T}_\rho(0) \lim_{\epsilon \rightarrow 0}
        \int_0^{2\pi} U,_{\rho}(\rho=\epsilon,\vartheta,t)d\vartheta =
        MU,_{tt}(\rho=0,t), \quad t\in ]0,T_0[,
        \vspace{0.25em}\\
        U(\rho, \vartheta,t)|_{t=0}=0, \quad U,_t(\rho,
        \vartheta,t)|_{t=0}=0,  \quad
    (\rho,\vartheta)\in ]0,R[\times[0,2\pi],
        \vspace{0.25em}\\
\end{array}
\right. } \) \vskip -11.0em
\begin{eqnarray}
& & \label{eq-9-2}\\
& & \label{eq-9-4}\\
& & \label{eq-9-5}\\
& & \label{eq-9-6}
\end{eqnarray}
\end{center}
where $U$, the first order partial derivatives of $U$, and
$U,_{tt}$ are bounded at $\rho=0$.

Concerning the direct problem \eqref{eq-9-2}--\eqref{eq-9-6}, if
$g \in C^1([0,T_0])$ and $f \in L^2(B_R)$, then there exists a
unique solution $U\in C^1([0,T_0], L^2(B_R)) $ to \eqref{eq-9-2}--\eqref{eq-9-6}.

Our main result on the determination of the source
$f=f(\rho,\vartheta)$ {}from measurements of the dynamic response
of the finished spider orb-web is the following uniqueness
theorem.

\begin{theorem}
\label{teo:Uniqueness}
For a given $g \in C^1([0,T_0])$, with $g(0) \neq 0$, let
$U_{f_1}$, $U_{f_2}$ be the solution to
\eqref{eq-9-2}--\eqref{eq-9-6} corresponding to the distributed
force field $f_1 \in L^2(B_R)$, $f_2\in L^2(B_R)$, respectively.
Then, the existence of any set of the type
$\omega=]\rho-\epsilon,\rho+\epsilon[ \times [0,2\pi] \times
[0,\tau_0[$, $0<\tau_0 <T_0$, $0<\epsilon<\rho <
\frac{R}{2}$, so that
$U_{f_1}(t,\rho,\theta)=U_{f_2}(t,\rho,\theta)$ in $\omega$,
implies that $f_1\equiv f_2$ in $B_R$.
\end{theorem}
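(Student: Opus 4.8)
The plan is to remove the boundary/initial data by linearity, expand the response on the eigenfunctions of the spatial operator, and then separate the contribution of each eigenfrequency from the measured data by an almost periodic distribution argument.

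First I would set $U:=U_{f_1}-U_{f_2}$ and $f:=f_1-f_2$. By linearity of \eqref{eq-9-2}--\eqref{eq-9-6}, $U$ is the solution generated by the source $g(t)f(\rho,\vartheta)$ with homogeneous initial data, and the hypothesis reads $U\equiv 0$ on $\omega$. The goal becomes to prove $f\equiv 0$ in $B_R$, which is equivalent to $f_1\equiv f_2$. Next I would invoke the spectral resolution built in Section \ref{sec:ForcedResponse}. Let $\{\phi_{m,k}\}$ be the eigenfunctions of the spatial operator in \eqref{eq-7-3}, orthonormal for the energy inner product that carries the weight $\widetilde{\gamma}$ together with the point mass $M$ at the origin, with eigenfrequencies $\nu_{m,k}>0$; here $k$ labels the angular Fourier mode $e^{ik\vartheta}$ and $m$ the radial mode, so $\phi_{m,k}(\rho,\vartheta)=R_{m,k}(\rho)\,e^{ik\vartheta}$. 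By Duhamel's principle the solution with zero initial data admits the representation
\begin{equation}
\label{eq:plan-series}
U(\rho,\vartheta,t)=\sum_{m,k}\langle f,\phi_{m,k}\rangle\,\phi_{m,k}(\rho,\vartheta)\,\frac{1}{\nu_{m,k}}\int_0^t g(s)\sin\!\big(\nu_{m,k}(t-s)\big)\,ds .
\end{equation}
Integrating the convolution by parts, the time factor splits into a quasi-static part proportional to $g(t)/\nu_{m,k}^2$, a pure oscillation $-g(0)\cos(\nu_{m,k}t)/\nu_{m,k}^2$, and a remainder built from $g'$; the oscillation carries the nonzero amplitude $g(0)$, which is precisely where the assumption $g(0)\neq 0$ becomes decisive.

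Then I would exploit the angular structure. Since $\omega=\,]\rho-\epsilon,\rho+\epsilon[\,\times[0,2\pi]\times[0,\tau_0[$ is a full annulus in $\vartheta$, orthogonality of $\{e^{ik\vartheta}\}$ over $[0,2\pi]$ decouples \eqref{eq:plan-series} mode by mode: for each fixed $k$ and every $(\rho,t)\in\,]\rho-\epsilon,\rho+\epsilon[\,\times[0,\tau_0[$,
\begin{equation}
\label{eq:plan-radial}
\sum_{m}\langle f,\phi_{m,k}\rangle\,R_{m,k}(\rho)\,\frac{1}{\nu_{m,k}}\int_0^t g(s)\sin\!\big(\nu_{m,k}(t-s)\big)\,ds=0 .
\end{equation}
For fixed $k$ the radial eigenvalue problem is of Sturm--Liouville type on $]0,R[$, so its eigenfrequencies $\nu_{m,k}$ are simple and strictly increasing in $m$, hence pairwise distinct within a fixed angular mode. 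Viewing $t\mapsto\sum_m(\cdots)$ in \eqref{eq:plan-radial} as an almost periodic distribution, I would extract its Fourier--Bohr component at each frequency $\nu_{m,k}$. Because $g(0)\neq 0$ forces a nonzero oscillatory amplitude at every $\nu_{m,k}$, this isolation yields, for each $m$ and $k$,
\begin{equation}
\label{eq:plan-kill}
\langle f,\phi_{m,k}\rangle\,R_{m,k}(\rho)=0,\qquad \rho\in\,]\rho-\epsilon,\rho+\epsilon[ .
\end{equation}

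Finally, since $]\rho-\epsilon,\rho+\epsilon[\,\subset\,]0,R[$ lies in the regular range of the radial operator (away from the singular center $\rho=0$ and the support $\rho=R$, as guaranteed by $0<\epsilon<\rho<R/2$), each nontrivial $R_{m,k}$ solves a second-order linear ODE with coefficients regular there; by the uniqueness theorem for that ODE it has only isolated zeros and cannot vanish on a nonempty open subinterval. Thus \eqref{eq:plan-kill} forces $\langle f,\phi_{m,k}\rangle=0$ for all $m,k$, and completeness of $\{\phi_{m,k}\}$ in the weighted $L^2(B_R)$ space gives $f\equiv 0$, i.e.\ $f_1\equiv f_2$. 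The step I expect to be the main obstacle is the frequency-isolation itself: justifying the termwise application of the almost periodic distributional pairing to the infinite series \eqref{eq:plan-radial} (convergence in a topology strong enough to pair against each exponential), ensuring that the non-oscillatory $g(t)/\nu_{m,k}^2$ part and the $g'$-remainder do not pollute the $\cos(\nu_{m,k}t)$ spectrum, and confirming that within a fixed angular mode no two eigenfrequencies coincide so that every coefficient is recovered individually.
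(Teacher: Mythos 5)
Your overall architecture coincides with the paper's: reduce to $f=f_1-f_2$ by linearity, expand on the eigenbasis via Duhamel, decouple the angular modes by orthogonality over the full circle $[0,2\pi]$, kill each coefficient by a frequency argument, and finish with the facts that no radial eigenfunction vanishes on an open subinterval and that the eigenfunctions are complete. The gap is exactly where you suspect it: the frequency-isolation step, and your proposed route through it does not work as stated. Integrating the Duhamel convolution by parts leaves three pieces --- $\mathcal{F}_m\,g(t)$, $-\mathcal{F}_m\,g(0)\cos(\nu_m t)$, and $-\mathcal{F}_m\int_0^t g'(\tau)\cos(\nu_m(t-\tau))\,d\tau$ --- and the first and third are not spectrally disjoint from the second: $g$ is an arbitrary $C^1$ function (it could itself oscillate at some $\nu_{m,k}$), and the $g'$-convolution, rewritten as $\cos(\nu_m t)\int_0^t g'\cos(\nu_m\tau)\,d\tau+\sin(\nu_m t)\int_0^t g'\sin(\nu_m\tau)\,d\tau$, has $t$-dependent amplitudes at the very same frequency. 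Moreover a ``Fourier--Bohr component'' is a long-time mean and is simply not available on the finite window $[0,\tau_0[$. So the claim that isolating the $\cos(\nu_{m,k}t)$ spectrum recovers the amplitude $g(0)\langle f,\phi_{m,k}\rangle$ is unjustified.

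The paper closes this gap with two specific devices you are missing. First, instead of integrating by parts, it writes $U=\int_0^t g(t-\tau)\,\mathcal{U}(\rho,\vartheta,\tau)\,d\tau$ with $\mathcal{U}$ the pure sine series, differentiates in $t$ to get $0=g(0)\,\mathcal{U}+\int_0^t g'(t-\tau)\,\mathcal{U}\,d\tau$ on $\omega$, and observes that since $g(0)\neq 0$ this is a Volterra equation of the second kind, whence $\mathcal{U}\equiv 0$ on $\omega$. This removes $g$ from the problem entirely and leaves a genuine almost periodic (sine) series with no cross-contamination. Second, the coefficient extraction from that series on a \emph{finite} time interval is done not by Bohr means but by a Paley--Wiener interpolation lemma (Lemma \ref{teo:LemaExtensao}): after pairing against a radial test function $\varphi\in C_c^\infty(]\rho_1,\rho_2[)$ so that the coefficients lie in $\ell^2$ (Lemma \ref{teo:Sequencesinl2}), one uses that the sequences $\sqrt{\lambda^{(n)}_m}$ are uniformly discrete with $O(m)$ growth (the Fulton/Carlson asymptotics \eqref{eq-14-4}, \eqref{eq-15-5}) to conclude all coefficients vanish, provided the observation time exceeds a threshold $T_0$. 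Pairwise distinctness of the frequencies within a fixed angular mode, which is all you invoke, is not enough for this step; you need the quantitative gap and growth conditions, and you need the $\ell^2$ summability after testing in $\rho$, neither of which your proposal supplies.
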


The proof of Theorem \ref{teo:Uniqueness} is given in Section
\ref{sec:ProofMain}.

\section{Forced dynamic response of the orb-web}
\label{sec:ForcedResponse}

Our proof of the uniqueness theorem strongly relies on the series
representation of the solution to \eqref{eq-9-2}--\eqref{eq-9-6}
in terms of the eigensolutions of the orb-web. Therefore, the
present section is mainly devoted to the study of the free
vibration problem of the system.

\subsection{Eigenvalue problem}
\label{EigProblem}

Setting in \eqref{eq-9-2}, with $f\equiv 0$,
\begin{equation}
    \label{eq-11-1}
   U(\rho, \vartheta,t)=w(\rho,\vartheta)y(t),
\end{equation}
we can separate the spatial variables $(\rho, \vartheta)$ {}from
the time variable $t$, obtaining
\begin{equation}
    \label{eq-11-2}
   y'' + \lambda y =0, \quad t>0,
\end{equation}
and
\begin{equation}
    \label{eq-11-3}
    \frac{C_\rho}{\rho} \overline{T}_\rho w,_{\rho\rho} +
    C_\vartheta \frac{\overline{T}_\vartheta  }{\rho^2}
    (w,_{\vartheta \vartheta} + \rho w,_\rho) + \lambda \widetilde{\gamma}(\rho) w=0, \quad
    (\rho, \vartheta)\in ]0,R[\times[0,2\pi],
\end{equation}
where $\lambda \in \R^+$ is the eigenvalue and $w$ the
corresponding eigenfunction. We look for a non-trivial solution to
\eqref{eq-11-3} of the form
\begin{equation}
    \label{eq-12-1}
   w(\rho,\vartheta)=u(\rho)\Theta(\vartheta).
\end{equation}
Since $w$ is a periodic function of $\vartheta$, we infer the
following eigenvalue problem for $\Theta(\vartheta)$:
\begin{center}
\( {\displaystyle \left\{
\begin{array}{lr}
        \Theta''  +\nu^2 \Theta=0,
        \quad \vartheta \in ]0,2\pi[,
    \vspace{0.25em}\\
    \Theta(0)=\Theta (2\pi),
        \vspace{0.25em}\\
    \Theta'(0)=\Theta'(2\pi),
        \vspace{0.25em}\\
\end{array}
\right. } \) \vskip -8.5em
\begin{eqnarray}
& & \label{eq-12-2}\\
& & \label{eq-12-3} \\
& & \label{eq-12-4}
\end{eqnarray}
\end{center}
which admits the infinite sequence of eigenpairs
\begin{equation}
    \label{eq-12-5}
   \nu_n^2=n^2, \quad \Theta^{(n)}(\vartheta)=A_n \cos(n\vartheta)
   +B_n \sin(n\vartheta), \quad n=0,1,2,...
\end{equation}
Recalling \eqref{eq-5bis-1}, if $n=0$, then
$\Theta^{(0)}(\vartheta)=constant \neq 0$, and we have the
classical Sturm-Liouville equation
\begin{equation}
    \label{eq-12-6}
    (C_\rho\overline{T}_\rho {u^{(0)}}')' + \lambda^{(0)}\gamma u^{(0)}    =0,
        \quad \rho \in (0,R),
\end{equation}
where
\begin{equation}
    \label{eq-12-7}
    \gamma \equiv \rho \widetilde{\gamma}=C_\rho m_\rho + \rho
    C_\vartheta m_\vartheta.
\end{equation}
In this case, the corresponding eigenfunctions $w^{(0)}$ are
axially-symmetric functions.

If $n \geq 1$, then a Coulomb-like singularity appears at
$\rho=0$:
\begin{equation}
    \label{eq-13-1}
    (C_\rho\overline{T}_\rho {u^{(n)}}')' +  \lambda^{(n)} \gamma {u^{(n)}}  =
    n^2 \frac{C_\vartheta \overline{T}_\vartheta }{\rho} {u^{(n)}},
        \quad \rho \in ]0,R[.
\end{equation}
The eigenvalue problem is completed by assigning the boundary
conditions to the equations \eqref{eq-12-6} ($n =0$) and
\eqref{eq-13-1} ($n \geq 1$). At $\rho=R$, the boundary is
supported and then
\begin{equation}
    \label{eq-13-2}
    {u^{(n)}}(R)=0, \quad n \geq 0.
\end{equation}
To derive the boundary condition at $\rho=0$, let us use the
dynamic equilibrium equation \eqref{eq-8-4}. Replacing
$U=w^{(n)}(\rho,\vartheta)y(t)$ in \eqref{eq-8-4}, with
$w^{(n)}=u^{(n)}\Theta^{(n)}$, $n \geq 0$, using \eqref{eq-11-2}
and assuming $u^{(n)}(0)$, ${u^{(n)}}'(0)$ bounded, we have
\begin{equation}
    \label{eq-13-3}
    C_\rho \overline{T}_\rho(0) {u^{(n)}}'(0)
    \int_0^{2\pi} \Theta^{(n)}(\vartheta)d\vartheta = - M w^{(n)}(\rho=0)\lambda^{(n)},
\end{equation}
Now, we can distinguish two cases.

Case i) $n\geq 1$. Then $\int_0^{2\pi}
\Theta^{(n)}(\vartheta)d\vartheta=0$, and the end condition at
$\rho=0$ is
\begin{equation}
    \label{eq-13-4}
    u^{(n)}(0)=0.
\end{equation}
Case ii) $n=0$. Then $\int_0^{2\pi}
\Theta^{(n)}(\vartheta)d\vartheta=2\pi \cdot const$, with $const
\neq 0$, and the end condition at $\rho =0$ becomes
\begin{equation}
    \label{eq-13-5}
    2\pi C_\rho \overline{T}_\rho(0) {u^{(0)}}'(0)= -\lambda^{(0)}M u^{(0)}(0).
\end{equation}

\subsection{Eigenpair class $n=0$}
\label{n=0}

The eigenvalue problem is as follows:
\begin{center}
\( {\displaystyle \left\{
\begin{array}{lr}
        (C_\rho \overline{T}_\rho {u^{(0)}}')' + \lambda^{(0)}\gamma u^{(0)} =0,
        \quad \rho \in ]0,R[,
    \vspace{0.25em}\\
    2\pi C_\rho \overline{T}_\rho(0) {u^{(0)}}'(0)= -\lambda^{(0)}M u^{(0)}(0),
        \vspace{0.25em}\\
    u^{(0)}(R)=0.
        \vspace{0.25em}\\
\end{array}
\right. } \) \vskip -8.0em
\begin{eqnarray}
& & \label{eq-14-1}\\
& & \label{eq-14-2} \\
& & \label{eq-14-3}
\end{eqnarray}
\end{center}
The eigenvalue problem \eqref{eq-14-1}--\eqref{eq-14-3} has an
infinite sequence of real, simple eigenvalues
$\{\lambda^{(0)}_m\}_{m=1}^\infty$ such that $0 < \lambda^{(0)}_1
<\lambda^{(0)}_2<...$ and $\lim_{m \rightarrow +\infty}
\lambda^{(0)}_m =+\infty$. Moreover, the following asymptotic
eigenvalue estimate holds:
\begin{equation}
    \label{eq-14-4}
    \sqrt{ \lambda^{(0)}_m   } =C^{(0)}m\pi + O(m^{-1}), \quad
    \hbox{as } m \rightarrow +\infty,
\end{equation}
where $C^{(0)}>0$ is a suitable constant independent of $m$.

The proof of the above properties follows {}from a classical
result due to Fulton \cite{Fulton1977}. Let us sketch the main
steps of the proof. In order to simplify the notation, let us
rewrite \eqref{eq-14-1}--\eqref{eq-14-3} as
\begin{center}
\( {\displaystyle \left\{
\begin{array}{lr}
        \frac{d}{dy}\left ( p \frac{dv}{dy} \right ) +\lambda r v   =0,
        \quad \rho \in ]0,R[,
    \vspace{0.25em}\\
    \left ( p \frac{dv}{dy} \right )(0) = - \lambda
    \mathcal{M}v(0),
        \vspace{0.25em}\\
    v(R)=0,
        \vspace{0.25em}\\
\end{array}
\right. } \) \vskip -8.0em
\begin{eqnarray}
& & \label{eq-14A1-1}\\
& & \label{eq-14A1-2} \\
& & \label{eq-14A1-3}
\end{eqnarray}
\end{center}
where $y=\rho$, $v(y)=u^{(0)}(\rho)$, $p(y)=C_\rho
\overline{T}_\rho(\rho)$, $r(y)=\gamma(\rho)$,
$\lambda=\lambda^{(0)}$, $\mathcal{M}= \frac{M}{2\pi}$. By the
Liouville transformation
\begin{equation}
    \label{eq-14A1-4}
    x = \frac{1}{J} \int_0^y \sqrt{  \frac{r(s)}{p(s)}  }ds, \quad
    J= \int_0^R \sqrt{  \frac{r(s)}{p(s)}  }ds,
\end{equation}
\begin{equation}
    \label{eq-14A1-5}
    A(x)=(r(y)p(y))^{\frac{1}{2}}, \quad u(x)=v(y),
\end{equation}
the problem \eqref{eq-14-1}--\eqref{eq-14-3} takes the
impedance-like form
\begin{center}
\( {\displaystyle \left\{
\begin{array}{lr}
        \frac{d}{dx}\left ( A \frac{du}{dx} \right ) +\mu A u   =0,
        \quad x \in ]0,1[,
    \vspace{0.25em}\\
    \left ( A \frac{du}{dx} \right )(0) = - \mu \frac{\mathcal{M}}{J}
    u(0),
        \vspace{0.25em}\\
    u(1)=0,
        \vspace{0.25em}\\
\end{array}
\right. } \) \vskip -8.0em
\begin{eqnarray}
& & \label{eq-14A1-6}\\
& & \label{eq-14A1-7} \\
& & \label{eq-14A1-8}
\end{eqnarray}
\end{center}
where $\mu= \lambda J^2$. Letting $A(x)=a^2(x)$, $z(x)=a(x)u(x)$,
the problem \eqref{eq-14A1-6}--\eqref{eq-14A1-8} takes the
Sturm-Liouville canonical form
\begin{center}
\( {\displaystyle \left\{
\begin{array}{lr}
        \frac{d^2 z}{dx^2} + \mu z =qz,
        \quad x \in ]0,1[,
    \vspace{0.25em}\\
    \left ( a \frac{dz}{dx}- z \frac{da}{dx}  \right )(0) = - \mu \frac{\mathcal{M}}{J}
    \frac{z(0)}{a(0)},
        \vspace{0.25em}\\
    z(1)=0,
        \vspace{0.25em}\\
\end{array}
\right. } \) \vskip -8.0em
\begin{eqnarray}
& & \label{eq-14A2-1}\\
& & \label{eq-14A2-2} \\
& & \label{eq-14A2-3}
\end{eqnarray}
\end{center}
with $q(x)= \frac{1}{a} \frac{d^2a}{dx^2}$ in $[0,1]$, $q \in
C^0([0,1])$. Now, we can use the analysis by Fulton
\cite{Fulton1977} (Section 4, case 4) to conclude that there
exists a countable infinite family or real, simple eigenvalues
$\mu_m$ to \eqref{eq-14A2-1}--\eqref{eq-14A2-3} with
\begin{equation}
    \label{eq-14A2-4}
    \sqrt{ \mu_m  } =m\pi + O(m^{-1}), \quad
    \hbox{as } m \rightarrow +\infty,
\end{equation}
which implies the asymptotic estimate \eqref{eq-14-4}. Moreover,
\begin{equation}
    \label{eq-14A2-5}
    z_m(x) = \sqrt{2}  \sin(m\pi(1-x)) + O(m^{-1}), \quad
    \hbox{as } m \rightarrow +\infty.
\end{equation}

\subsection{Eigenpair classes $n\geq 1$}
\label{n>0}

Let $n$, $n \geq 1$, be given. The eigenvalue problem along the
radial coordinate $\rho$ is as follows:
\begin{center}
\( {\displaystyle \left\{
\begin{array}{lr}
        (C_\rho \overline{T}_\rho {u^{(n)}}')' + \lambda^{(n)}\gamma u^{(n)} =
        n^2 \frac{C_\vartheta \overline{T}_\vartheta }{\rho} {u^{(n)}},
        \quad \rho \in ]0,R[ ,
    \vspace{0.25em}\\
    u^{(n)}(0)=0,
        \vspace{0.25em}\\
    u^{(n)}(R)=0.
        \vspace{0.25em}\\
\end{array}
\right. } \) \vskip -8.0em
\begin{eqnarray}
& & \label{eq-15-1}\\
& & \label{eq-15-2} \\
& & \label{eq-15-3}
\end{eqnarray}
\end{center}
Also in the present case, by applying two Sturm-Liouville
transformations in sequence, equation \eqref{eq-15-1} is reduced
to impedance form, and then to the standard canonical form
\begin{equation}
    \label{eq-15-4}
   y''(x) + \mu^{(n)} y(x) = (\widetilde{q}(x) + V(x))y(x), \quad x
   \in ]0,1[,
\end{equation}
where $\mu^{(n)}= G^2 \lambda^{(n)}$, $G=  \int_0^R  \left (
\frac{\gamma(s)}{ \overline{T}_\rho(s) } \right )^{  \frac{1}{2}}
ds$, $\widetilde{q}$ is a regular function in $[0,1]$, and the
potential $V(x)$ contains the singularity, namely $|V(x)| \leq
\frac{C}{x}$ in $]0,1]$ for some positive constant $C$. Now, the
analysis developed by Carlson \cite{Carlson1994} shows that, for
every $n \geq 1$, the Dirichlet eigenvalue problem associated to
\eqref{eq-15-4} has an infinite countable sequence of eigenpairs
$\{\mu^{(n)}_m, y_m^{(n)}(x)\}_{m=1}^\infty$, and the following
asymptotic estimates hold
\begin{equation}
    \label{eq-15-5}
    \sqrt{ \mu^{(n)}_m  } = m\pi + O(m^{-1+\epsilon}), \quad
    \hbox{as } m \rightarrow +\infty,
\end{equation}
\begin{equation}
    \label{eq-15-5-bis}
    y_m^{(n)}(x) = \sqrt{2} \sin (m\pi x) + O(m^{-1+\epsilon}), \quad
    \hbox{as } m \rightarrow +\infty,
\end{equation}
where $0<\epsilon<1$. {}From \eqref{eq-15-5} one can deduce that
\begin{equation}
    \label{eq-15-5-ter}
    \sqrt{ \lambda^{(n)}_m  } =C^{(n)} m\pi + O(m^{-1+\epsilon}), \quad
    \hbox{as } m \rightarrow +\infty,
\end{equation}
where $C^{(n)}$ is a suitable constant independent of $m$.

\subsection{Series solution of the forced problem}
\label{SeriesSolution}

Let us formally derive a series representation of the solution of
the forced dynamic problem \eqref{eq-9-2}--\eqref{eq-9-6}. We look
for a solution $U=U(\rho,\vartheta,t)$ expressed in series of
eigenfunctions of the orb-web as
\begin{multline}
    \label{eq-16-1}
    U(\rho,\vartheta,t) = \sum_{m=1}^\infty c_m^{(0)}(t)
    u_m^{(0)}(\rho)+ \\
    + \sum_{n=1}^\infty \sum_{m=1}^\infty
    \left (
    c_m^{(n)}(t) \cos (n\vartheta) + d_m^{(n)}(t) \sin (n\vartheta)
    \right ) u_m^{(n)}(\rho),
\end{multline}
where $u_m^{(0)}=u_m^{(0)}(\rho)$ and, for every given $n \geq 1$,
$u_m^{(n)}=u_m^{(n)}(\rho)$ are the eigensolutions to
\eqref{eq-14-1}--\eqref{eq-14-3} and
\eqref{eq-15-1}--\eqref{eq-15-3}, respectively. The eigensolutions
are normalized as follows:
\begin{equation}
    \label{eq-orthonormalization}
    < u_m^{(0)}, u_i^{(0)}>_{\gamma,M} =
    \frac{\delta_{mi}}{\lambda^{(0)}_i},
    \quad
    < u_m^{(n)}, u_i^{(n)}>_{\gamma} =
    \frac{\delta_{mi}}{\lambda^{(n)}_i},
    \quad i,m,n \geq 1,
\end{equation}
where $\delta_{mi}$ denotes the Kronecker's symbol, and the scalar
products $< \cdot, \cdot >_{\gamma,M}$, $< \cdot, \cdot
>_{\gamma}$ are defined as
\begin{equation}
    \label{eq-scalprod-1}
    < h_1, h_2 >_{\gamma,M}=
    \int_0^R \gamma(\rho) h_1(\rho) h_2(\rho)d\rho + \frac{M}{2\pi}
    h_1(0)h_2(0),
\end{equation}
\begin{equation}
    \label{eq-scalprod-2}
    < h_1, h_2 >_{\gamma}=
    \int_0^R \gamma(\rho) h_1(\rho) h_2(\rho)d\rho,
\end{equation}
for every smooth function $h_i: [0,R] \rightarrow \R$, $i=1,2$.

It is convenient to rewrite equation \eqref{eq-9-2} multiplying
both sides by $\rho$, that is
\begin{equation}
    \label{eq-16-2}
    (C_\rho \overline{T}_\rho U,_\rho)_\rho - \gamma U,_ {tt}=
    - \frac{C_\vartheta \overline{T}_\vartheta }{\rho } U,_{\vartheta \vartheta} - \rho g(t)f(\rho, \vartheta).
\end{equation}
Let us derive the initial value problem for the function
$c_m^{(0)}(t)$, $m \geq 1$. We start by considering the
equilibrium condition \eqref{eq-9-5}. By using the series
expansion \eqref{eq-16-1} of $U$, and noting that $\int_0^{2\pi}
\cos(n\vartheta)d\vartheta = \int_0^{2\pi}
\sin(n\vartheta)d\vartheta =0$ for every $n \geq 1$,
$u_m^{(n)}(0)=0$ for every $n,m \geq 1$, condition \eqref{eq-9-5}
becomes
\begin{equation}
    \label{eq-c0-A}
    \sum_{m=1}^\infty
    2\pi C_\rho \overline{T}_\rho (0) {u_m^{(0)}}'(0) c_m^{(0)}(t)
    =
    M \sum_{m=1}^\infty
    {c_m^{(0)}}''(t) u_m^{(0)}(0).
\end{equation}
Next, we replace the series expression of $U$ in \eqref{eq-9-2},
we multiply by $u_i^{(0)}$ and we integrate on $[0,2\pi] \times
[0,R]$. Since $\int_0^{2\pi} \cos(n\vartheta)d\vartheta =
\int_0^{2\pi} \sin(n\vartheta)d\vartheta =0$ for every $n \geq 1$,
the terms involving $u_m^{(n)}$ vanish, and we have
\begin{multline}
    \label{eq-c0-B}
    2\pi
    \sum_{m=1}^\infty
    \left (
    c_m^{(0)}(t) \int_0^R (C_\rho \overline{T}_\rho {u_m^{(0)}}')'
    u_i^{(0)} d\rho
    -
    {c_m^{(0)}}''(t) \int_0^R \gamma u_m^{(0)} u_i^{(0)} d\rho
    \right )
    =
    \\
    = - \int_0^{2\pi} \int_0^R g(t) f(\rho,\vartheta)u_i^{(0)}
    \rho d\rho d\vartheta.
\end{multline}
Integrating by parts on the first integral and using
\eqref{eq-c0-A}, the left hand side of \eqref{eq-c0-B} becomes
\begin{multline}
    \label{eq-c0-C}
    2\pi
    \sum_{m=1}^\infty
    \left (
    c_m^{(0)}(t)
    \left (
    - C_\rho \overline{T}_\rho (0) {u_m^{(0)}}'(0) u_i^{(0)}(0)
    -
    \int_0^R C_\rho \overline{T}_\rho {u_m^{(0)}}' {u_i^{(0)}}' d\rho
    \right )
    \right.
    -
    \\
    -
    \left. {c_m^{(0)}}''(t) \int_0^R \gamma u_m^{(0)} u_i^{(0)} d\rho
    \right ) =
    \\
    =
    - \sum_{m=1}^\infty
    \left (
    {c_m^{(0)}}''(t)
    \left (
    2\pi \int_0^R \gamma {u_m^{(0)}} {u_i^{(0)}} d\rho + M {u_m^{(0)}}(0)
    {u_i^{(0)}}(0)
    \right ) -
    \right.
    \\
    - \left.
     c_m^{(0)}(t) 2\pi \int_0^R C_\rho \overline{T}_\rho {u_m^{(0)}}' {u_i^{(0)}}' d\rho
     \right ).
\end{multline}
Using the orthogonality condition \eqref{eq-orthonormalization}
(left) and by the weak formulation of the eigenvalue problem, we
finally have
\begin{equation}
    \label{eq-c0-D}
   {c_m^{(0)}}''(t)+ \lambda^{(0)}_m c_m^{(0)}(t) =
        \lambda^{(0)}_m \mathcal{F}^{(0)}_m g(t), \quad t>0,
\end{equation}
where
\begin{equation}
    \label{eq-c0-E}
    \mathcal{F}^{(0)}_m = \frac{1}{2\pi} \int_0^{2\pi} \int_0^R f
    u_m^{(0)} \rho d\rho d\vartheta.
\end{equation}
The differential equation \eqref{eq-c0-D} is completed with the
initial values
\begin{equation}
    \label{eq-c0-F}
    {c_m^{(0)}}(0)=0, \quad {c_m^{(0)}}'(0)=0, \quad \hbox{for
    every } m \geq 1.
\end{equation}
By proceeding similarly, the initial value problems for
${c_m^{(n)}}(t)$, ${d_m^{(n)}}(t)$ are given by
\begin{center}
\( {\displaystyle \left\{
\begin{array}{lr}
        {c_m^{(n)}}''(t)+ \lambda^{(n)}_m c_m^{(n)}(t) = \lambda^{(n)}_m
         \mathcal{F}^{(n)}_{Cm} g(t), \quad t>0,
    \vspace{0.25em}\\
    c_m^{(n)}(0)=0,
        \vspace{0.25em}\\
    {c_m^{(n)}}'(0)=0,
        \vspace{0.25em}\\
\end{array}
\right. } \) \vskip -8.0em
\begin{eqnarray}
& & \label{eq-17-2}\\
& & \label{eq-17-3} \\
& & \label{eq-17-4}
\end{eqnarray}
\end{center}
\begin{center}
\( {\displaystyle \left\{
\begin{array}{lr}
        {d_m^{(n)}}''(t)+ \lambda^{(n)}_m d_m^{(n)}(t) = \lambda^{(n)}_m
         \mathcal{F}^{(n)}_{Sm} g(t), \quad t>0,
    \vspace{0.25em}\\
    d_m^{(n)}(0)=0,
        \vspace{0.25em}\\
    {d_m^{(n)}}'(0)=0,
        \vspace{0.25em}\\
\end{array}
\right. } \) \vskip -8.0em
\begin{eqnarray}
& & \label{eq-17bis-2}\\
& & \label{eq-17bis-3} \\
& & \label{eq-17bis-4}
\end{eqnarray}
\end{center}
where, for $m,n \geq 1$,
\begin{equation}
    \label{eq-c0-I}
    \mathcal{F}^{(n)}_{Cm} = \frac{1}{\pi} \int_0^{2\pi} \int_0^R f
    u_m^{(n)} \cos(n\vartheta) \rho d\rho d\vartheta,
\end{equation}
\begin{equation}
    \label{eq-c0-L}
    \mathcal{F}^{(n)}_{Sm} = \frac{1}{\pi} \int_0^{2\pi} \int_0^R f
    u_m^{(n)} \sin(n\vartheta) \rho d\rho d\vartheta.
\end{equation}
The solution to the above initial value problems is given by the
Duhamel integral representation
\begin{equation}
    \label{eq-17-5}
    c_m^{(0)}(t)= \sqrt{\lambda^{(0)}_m } \mathcal{F}^{(0)}_m
    \int_0^t \sin \left ( \sqrt{\lambda^{(0)}_m } (t-\tau) \right ) g(\tau)
    d\tau, \quad t>0, \ m \geq 1,
\end{equation}
\begin{equation}
    \label{eq-18-1}
    c_m^{(n)}(t)= \sqrt{\lambda^{(n)}_m } \mathcal{F}^{(n)}_{Cm}
    \int_0^t \sin \left ( \sqrt{\lambda^{(n)}_m } (t-\tau) \right )g(\tau)
    d\tau, \quad t>0, \ m,n \geq 1,
\end{equation}
\begin{equation}
    \label{eq-18-2}
    d_m^{(n)}(t)= \sqrt{\lambda^{(n)}_m } \mathcal{F}^{(n)}_{Sm}
    \int_0^t \sin \left ( \sqrt{\lambda^{(n)}_m } (t-\tau) \right ) g(\tau)
    d\tau, \quad t>0, \ m,n \geq 1.
\end{equation}
Therefore, the series representation of the solution $U$ becomes
\begin{multline}
    \label{eq-18-3}
    U(\rho,\vartheta,t) = \sum_{m=1}^\infty \sqrt{\lambda^{(0)}_m } \mathcal{F}^{(0)}_m
    \int_0^t \sin \left ( \sqrt{\lambda^{(0)}_m } (t-\tau) \right )g(\tau)
    d\tau
    \cdot u_m^{(0)}(\rho)+ \\
    + \sum_{n=1}^\infty \sum_{m=1}^\infty
    \sqrt{\lambda^{(n)}_m }  \left (
    \mathcal{F}^{(n)}_{Cm} \cos (n\vartheta) + \mathcal{F}^{(n)}_{Sm} \sin (n\vartheta)
    \right ) \times\\
    \times \int_0^t \sin \left ( \sqrt{\lambda^{(n)}_m } (t-\tau) \right )
    g(\tau) d\tau \cdot u_m^{(n)}(\rho).
\end{multline}
We notice that to justify the series expansion \eqref{eq-16-1} we
need to prove the completeness of the eigenfunctions of the
orb-web problem $\{w_m^{(n)}(\rho,\vartheta)\}_{n\geq 0, m \geq
1}$, where
\begin{center}
\( {\displaystyle \left\{
\begin{array}{lr}
        w_m^{(0)}(\rho,\vartheta)=u_m^{(0)}(\rho), \quad m \geq 1,
    \vspace{0.25em}\\
        w_m^{(n)}(\rho,\vartheta)=(A_m^{(n)} \cos(n\vartheta) + B_m^{(n)}
        \sin(n\vartheta))u_m^{(n)}(\rho),\quad m,n\geq1,
    \vspace{0.25em}\\
\end{array}
\right. } \) \vskip -6.0em
\begin{eqnarray}
& & \label{eq-18-4}\\
& & \label{eq-18-5}
\end{eqnarray}
\end{center}
with $A_m^{(n)}$, $B_m^{(n)}$ real constants. More precisely, we
shall prove the following result. Let $h \in L^2(B_R)$. If
$\int_{B_R} hw_m^{(n)}=0$ for every $n\geq 0$ and every $m\geq 1$,
then $h=0$ in $B_R$. Using density arguments, we can assume that
$h$ is regular enough such that its Fourier series
\begin{equation}
    \label{eq-19-1}
    h(\rho,\vartheta)= a_0(\rho) + \sum_{k=1}^\infty
    (a_k(\rho)\cos(k\vartheta) +b_k(\rho)\sin(k\vartheta))
\end{equation}
is uniformly convergent in $B_R$. Choosing $n=0$ and using
\eqref{eq-19-1}, we obtain
\begin{equation}
    \label{eq-19-2}
    \int_0^R a_0(\rho)\rho u_m^{(0)}(\rho) d\rho=0, \quad
    \hbox{for every } m\geq 1.
\end{equation}
By the completeness of the eigenfunctions
$\{u_m^{(0)}\}_{m=1}^\infty$ we have $a_0 \equiv 0$ in $[0,R]$.
Next, for a given $n$, $n \geq 1$, and choosing $B_m^{(n)}=0$,
$A_m^{(n)}=1$ for every $m \geq 1$, {}from \eqref{eq-19-1} we have
\begin{equation}
    \label{eq-19-4}
    \int_0^R a_n(\rho)\rho u_m^{(n)}(\rho) d\rho=0, \quad
    \hbox{for every } m\geq 1,
\end{equation}
which implies $a_n \equiv 0$ in $[0,R]$ by the completeness of the
eigenfunctions $\{u_m^{(n)}\}_{m=1}^\infty$. One can proceed
similarly to prove that $b_n \equiv 0$ in $[0,R]$, and therefore
$h=0$ in $B_R$.

Expression \eqref{eq-18-3} shows that the spatial force
distribution $f$ affects the dynamic response of the orb-web by
means of the coefficients $\{ \mathcal{F}^{(0)}_m\}_{m=1}^\infty$,
$\{\mathcal{F}^{(n)}_{Cm},
\mathcal{F}^{(n)}_{Sm}\}_{m,n=1}^\infty$. We conclude this section
by showing that the knowledge of all these coefficients allows to
determine $f$. More precisely, setting
\begin{equation}
    \label{eq-F1}
    \mathcal{F}(\rho,\vartheta)=\frac{\rho
    f(\rho,\vartheta)}{\gamma(\rho)},
\end{equation}
we show that the following eigenfunction expansion holds true:
\begin{multline}
    \label{eq-F2}
    \mathcal{F}(\rho,\vartheta) = \sum_{m=1}^\infty \lambda^{(0)}_m
    \mathcal{F}^{(0)}_m u_m^{(0)}(\rho)+ \\
    + \sum_{n=1}^\infty \sum_{m=1}^\infty
    \lambda^{(n)}_m \left (
    \mathcal{F}^{(n)}_{Cm} \cos (n\vartheta) + \mathcal{F}^{(n)}_{Sm} \sin (n\vartheta)
    \right ) u_m^{(n)}(\rho).
\end{multline}

It is enough to prove that \eqref{eq-F1}--\eqref{eq-F2} imply
\eqref{eq-c0-E}, \eqref{eq-c0-I}, \eqref{eq-c0-L}. We first
consider $\mathcal{F}^{(0)}_m$, for a given integer $m$, $m \geq
1$. Let us take the scalar product $<\cdot,\cdot>_{\gamma,M}$ of
both sides of \eqref{eq-F2} with $u_k^{(0)}$, for given $k \geq
1$, and then integrate on $[0,2\pi]$ with respect to the variable
$\vartheta$. Noticing that $\int_0^{2\pi}
\cos(n\vartheta)d\vartheta=\int_0^{2\pi}
\sin(n\vartheta)d\vartheta=0$ for every $n \geq 1$, and
$u_m^{(n)}(0)=0$ for every $m,n \geq 1$, we have
\begin{multline}
    \label{eq-F3}
    \int_0^{2\pi}
    \left (
    \int_0^R \mathcal{F} \gamma u_k^{(0)} d\rho +
    \frac{M}{2\pi}
    \mathcal{F} (\rho=0) u_k^{(0)}(0)
    \right )
    d\vartheta
    =
    \\
    =
    \int_0^{2\pi}
    \left (
    \sum_{m=1}^\infty
    \lambda^{(0)}_m
    \mathcal{F}^{(0)}_m
    \left (
    \int_0^R \gamma u_m^{(0)} u_k^{(0)} d\rho
    + \frac{M}{2\pi} u_m^{(0)}(0)  u_k^{(0)}(0)
    \right )
    \right )
    d\vartheta.
\end{multline}
Using the orthogonality condition \eqref{eq-orthonormalization}
(left) and observing that the boundary term at $\rho=0$ on the
left hand side of \eqref{eq-F3} vanishes (e.g., $\mathcal{F}
(\rho=0)=0$), we obtain the desired expression
\begin{equation}
    \label{eq-F3bis}
    \frac{1}{2\pi} \int_0^{2\pi} \int_0^R f u_m^{(0)}\rho d\rho
    d\vartheta =\mathcal{F}^{(0)}_m.
\end{equation}
Next, we take the scalar product  $<\cdot,\cdot>_\gamma$ of both
sides of \eqref{eq-F2} with the function
$u_i^{(k)}\cos(k\vartheta)$, for given $k$ and $i$, $k,i \geq 1$,
and we integrate on $[0,2\pi]$ with respect to the variable
$\vartheta$. We have
\begin{multline}
    \label{eq-F4}
    \int_0^{2\pi}
    \int_0^R \mathcal{F} \gamma u_i^{(k)} d\rho
    d\vartheta
     =
    \sum_{m=1}^\infty
    \lambda^{(0)}_m
    \mathcal{F}^{(0)}_m
    \int_0^R \gamma u_m^{(0)} u_i^{(k)} d\rho
    \int_0^{2\pi} \cos(k\vartheta)
    d\vartheta
    +
    \\
    +
     \sum_{n=1}^\infty \sum_{m=1}^\infty
    \lambda^{(n)}_m \left (
    \mathcal{F}^{(n)}_{Cm} \int_0^{2\pi}\cos (n\vartheta)\cos (k\vartheta)d\vartheta
     + \mathcal{F}^{(n)}_{Sm} \int_0^{2\pi}\sin (n\vartheta)\cos (k\vartheta)d\vartheta
    \right ) \cdot\\
    \cdot \int_0^R \gamma u_m^{(n)} u_i^{(k)} d\rho.
\end{multline}
Noticing that $\int_0^{2\pi} \cos(k\vartheta)d\vartheta=0$, by the
orthogonality of the trigonometric functions and by the
orthogonality condition \eqref{eq-orthonormalization} (right), we
have
\begin{equation}
    \label{eq-F5}
    \frac{1}{\pi} \int_0^{2\pi} \int_0^R f u_i^{(k)} \cos(k\vartheta) \rho d\rho
    d\vartheta =\mathcal{F}^{(k)}_{Ci}.
\end{equation}
The expression of $\mathcal{F}^{(k)}_{Si}$ can be determined
similarly.

\section{Proof of the main theorem}
\label{sec:ProofMain}

Before giving the proof of Theorem \ref{teo:Uniqueness}, we state
some auxiliary results and definitions.
\begin{definition}
Given a bounded set $S\subset \mathbb{R}$, with positive measure,
the one dimensional Paley-Wiener space $PW_S$ is defined as
    \begin{displaymath}
        PW_S=\{\widehat{F}| \ F\in \mathrm{L}^{2},\,   \mathrm{supp}(F)\subset
        S\},
    \end{displaymath}
where $\widehat{F}$ is the Fourier transform of the function $F$.
\end{definition}

Note that by the Paley-Wiener Theorem, the space $PW_S$ can be
characterized by the functions that can be extended to an entire
function in the whole $\mathbb{C}$, whose Fourier transform is
compactly supported in $S$.

\begin{definition}
A sequence $(\mu_n)_{n\in\mathbb{N}}\subset \mathbb{R}$ is
\emph{uniformly discrete} if there is $\epsilon>0$ such that for
any pair $(m,n)\in\mathbb{N}\times \mathbb{N}$ with $m \neq n$,
$|\mu_m-\mu_n|>\epsilon$.
\end{definition}

\begin{definition}
    The \emph{upper uniform density} of a uniformly discrete set $\mathbb{M}$ is defined by
    \begin{displaymath}
        D(\mathbb{M})=\lim_{c\rightarrow+\infty}\max_{a\in\mathbb{R}}\frac{\#(\mathbb{M}\cap]a,a+c[)}{c}.
    \end{displaymath}
\end{definition}

\begin{definition}
An indexed set $\mathbb{M} \doteq
(\mu_n)_{n\in\mathbb{N}}\subset\mathbb{R}$ is an interpolation set
for $PW_S$, $S\subset \mathbb{R}$ bounded with positive measure,
if for every sequence $(c_n)_{n\in\mathbb{N}}\subset \ell^{2}$
there is $\phi\in PW_S$ such that $\phi(\mu_n)=c_n$ for every $
\mu_n\in \mathbb{M}$.
\end{definition}

In the case of an interval $]a_1,a_2[\subset\mathbb{R}$, there is
a sufficient condition for $\mathbb{M}$ to be an interpolating
sequence \cite{Olevskii2009}. Precisely:
\begin{equation}
\label{eq:Kahane}
    D(\mathbb{M}) < \frac{1}{2\pi} (a_2-a_1)\Rightarrow \mbox{$\mathbb{M}$ is an interpolating set of $PW_{]a_1,a_2[}$}.
\end{equation}
The following lemma, which is a special case of a result that  can
be found in \cite{Kawano2011}, is directly related to the goal we
are seeking.
\begin{lemma}\label{teo:LemaExtensao}

Consider $F(t,x)=\sum_{n\in\mathbb{N}} A_n(x) \mathrm{e}^{- i \mu_n
t}$, $F: [0,T]  \times  \Omega  \rightarrow \R$,  with $(A_n(x))_{n\in\mathbb{N}}\subset \ell^{2}$. Suppose that
the sequence $(\mu_n)_{n\in\mathbb{N}}$ is uniformly discrete, with
$\mu_n=\mathcal{O}(n)$ as $n\rightarrow +\infty$.

There exists  $T_0>0$ such that if there is an interval $]0,T[
\subset \mathbb{R}$ with $T\geq T_0$ and there is a set
$\Omega_0$, $\Omega_0\subset \Omega$, such that  $F(t,x)=0$ in
$]0,T[\times \Omega_0$, then $F\equiv 0$ in $\Omega$.
\end{lemma}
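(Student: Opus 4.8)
The plan is to split the argument into a \emph{temporal} step and a \emph{spatial} step. The temporal step invokes the Paley--Wiener interpolation machinery to show that at each fixed point of the observation region $\Omega_0$ every coefficient $A_n$ must vanish; the spatial step then propagates this vanishing from $\Omega_0$ to all of $\Omega$. To fix $T_0$, note first that since $(\mu_n)_{n\in\mathbb{N}}$ is uniformly discrete, say with gaps bounded below by $\delta>0$, any interval of length $c$ contains at most $c/\delta+1$ of the $\mu_n$, so the upper uniform density $D(\mathbb{M})$ of $\mathbb{M}=(\mu_n)$ is finite; the growth bound $\mu_n=\mathcal{O}(n)$ ensures the frequencies are spread on a linear scale and that this density is well defined. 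Choosing any $T_0>2\pi D(\mathbb{M})$, every $T\geq T_0$ satisfies $D(\mathbb{M})<\frac{1}{2\pi}(T-0)$, so by the criterion \eqref{eq:Kahane} the set $\mathbb{M}$ is an interpolating set for $PW_{]0,T[}$.

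For the temporal step I would fix $x\in\Omega_0$ and test the hypothesis $F(t,x)=0$ on $]0,T[$ against an arbitrary $\psi\in L^2(]0,T[)$. Writing $\phi=\widehat{\psi}\in PW_{]0,T[}$ and interchanging summation and integration --- legitimate since $(A_n(x))_n\in\ell^2$ while the samples $(\phi(\mu_n))_n$ lie in $\ell^2$ by the Plancherel--P\'olya inequality for uniformly discrete sets --- I obtain $\sum_{n} A_n(x)\,\phi(\mu_n)=0$. Because $\mathbb{M}$ is interpolating for $PW_{]0,T[}$, the sample map $\phi\mapsto(\phi(\mu_n))_n$ is onto $\ell^2$; choosing $\phi$ with $\phi(\mu_n)=\overline{A_n(x)}$ gives $\sum_n|A_n(x)|^2=0$, hence $A_n(x)=0$ for every $n$. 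As $x\in\Omega_0$ is arbitrary, $A_n\equiv 0$ on $\Omega_0$ for all $n$; in particular $F(t,x)=0$ for every $t$ and every $x\in\Omega_0$.

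It remains to pass from $\Omega_0$ to $\Omega$, and this is the step I expect to be the main obstacle. Vanishing of all the $A_n$ on the subregion $\Omega_0$ does \emph{not} by itself force $F\equiv0$ on $\Omega$ for arbitrary $\ell^2$-valued coefficient fields --- a coefficient supported away from $\Omega_0$ would give a counterexample --- so one must exploit the extra regularity of the $A_n$. In the setting where this lemma is applied the $A_n$ are real-analytic in the spatial variable: they are built from the radial eigenfunctions $u_m^{(n)}$, which solve second-order ordinary differential equations with analytic coefficients on $]0,R[$, multiplied by the analytic angular factors $\cos(n\vartheta)$, $\sin(n\vartheta)$. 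Since $\Omega_0$ has nonempty interior and $\Omega$ is connected, the identity theorem for real-analytic functions then yields $A_n\equiv0$ on all of $\Omega$ for every $n$, whence $F\equiv0$ on $\Omega$, as claimed. The delicate points to be verified are precisely this unique-continuation/analyticity property near the singular center $\rho=0$, together with the justification of the term-by-term integration used in the temporal step.
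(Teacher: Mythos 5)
Your temporal step coincides with the argument the paper intends: the paper does not actually prove Lemma \ref{teo:LemaExtensao} (it is quoted as a special case of a result in \cite{Kawano2011}), but the surrounding definitions of uniformly discrete sequences, upper uniform density, interpolation sets, and the density criterion \eqref{eq:Kahane} are assembled precisely to support the duality argument you give: pair $F(\cdot,x)$ against $\psi\in L^2(]0,T[)$, observe $\widehat{\psi}\in PW_{]0,T[}$, use surjectivity of the sampling map $\phi\mapsto(\phi(\mu_n))_n$ onto $\ell^2$ to choose $\phi(\mu_n)=\overline{A_n(x)}$, and conclude $\sum_n|A_n(x)|^2=0$. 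Your choice $T_0>2\pi D(\mathbb{M})$ is the right one; note only that uniform discreteness alone already gives $D(\mathbb{M})\le 1/\delta<+\infty$, so the growth bound $\mu_n=\mathcal{O}(n)$ plays no role in this interpolation step (it matters in the almost-periodic framework of \cite{Kawano2011}), and the term-by-term pairing you flag is indeed justified by Cauchy--Schwarz with the Plancherel--P\'olya bound, or by Ingham's inequality, which gives $L^2(0,T)$ convergence of the series for $T$ large.

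Where you diverge is the spatial step, and your diagnosis is accurate: for arbitrary $\ell^2$-valued coefficient fields the implication ``$A_n\equiv 0$ on $\Omega_0$ for all $n$ $\Rightarrow$ $F\equiv 0$ on $\Omega$'' is false (coefficients supported off $\Omega_0$ give a counterexample), so the lemma as literally stated must be read with the extra structure present in \cite{Kawano2011}. The paper, however, never needs that step: in the proof of Theorem \ref{teo:Uniqueness} the lemma is invoked only \emph{after} pairing \eqref{eq:FirstUniqueness}--\eqref{eq:ThirdUniqueness} with test functions $\varphi\in C_c^\infty(]\rho_1,\rho_2[)$, so the coefficients $A_n$ are scalars (Lemma \ref{teo:Sequencesinl2} supplies the $\ell^2$ hypothesis) and the lemma is used purely as a one-variable statement---a nonharmonic Fourier series vanishing on a sufficiently long time interval has null coefficients. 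The spatial propagation you attempt via real-analyticity of the $A_n$ is instead achieved downstream by Lemma \ref{teo:NonNullEigenvector} (no eigenfunction vanishes on an open subset of $]0,R[$), which avoids exactly the delicate unique-continuation analysis near the singular center $\rho=0$ that you identify as the main obstacle. In short: your core argument is the intended one; your analyticity patch is a legitimate way to make the lemma's literal statement true in this setting, but the paper's route factors the spatial non-degeneracy out of the lemma entirely.
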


We recall that with the internal product
\begin{displaymath}
\langle h_1, h_2 \rangle_{\mathcal{M}}= \int_{0}^{2\pi} \langle
h_1(\cdot,\vartheta), h_2(\cdot,\vartheta) \rangle_{\gamma, M}\,
d\vartheta,
\end{displaymath}
the family of functions $\{u_m^{(0)}, u_m^{(n)}\cos(n \vartheta),
u_m^{(n)}\sin(n \vartheta) \}_{m,n\in\N}$ is an orthogonal family.
Moreover, we have already proved that the space generated with
this family is complete. We call this Hilbert space $\mathcal{M}$.

Our unknown function $f$ belongs to $L^{2}(B_R)$ in the sense that
using the natural norm of $L^{2}(B_R)$,
\begin{displaymath}
\int_{0}^{2\pi} \int_{0}^{R} |f(\rho,\vartheta)|^2 \rho \,d\rho\,d\vartheta< +\infty.
\end{displaymath}

Now, since $f \in L^{2}(B_R)$, the series \eqref{eq-F2} of
$\mathcal{F}(\rho,\vartheta)$ converges in the space
$\mathcal{M}$. In fact:
\begin{multline*}
\sum_{m\in\N} \lambda_m^{(0)} {\mathcal{F}_m^{(0)}}^2 + \sum_{m,n\in\N} \lambda_m^{(n)} \mathcal{F}_{Cm}^2 + \sum_{m,n\in\N} \lambda_m^{(n)} \mathcal{F}_{Sm}^2 = \langle \mathcal{F}, \mathcal{F} \rangle_{\mathcal{M}} = \\
= \left\langle \frac{\rho f}{\gamma}, \frac{\rho f}{\gamma} \right\rangle_{\mathcal{M}}= \frac{1}{2\pi} \int_{0}^{2\pi} \left\langle \frac{\rho f}{\gamma}, \frac{\rho f}{\gamma} \right\rangle_{\gamma,M}\, d\vartheta = \frac{1}{2\pi} \int_{0}^{2\pi}\int_0^R \frac{\rho}{\gamma} |f|^2 \rho \, d\rho\,d\vartheta\\
\leq K \int_{0}^{2\pi} \int_0^R  |f|^2 \rho \, d\rho\,d\vartheta < +\infty,
\end{multline*}
because $f \in L^{2}(B_R)$.

The above calculation also shows that the three sequences
$(\sqrt{\lambda_m^{(n)}} \mathcal{F}_{Cm}^{(n)} )_{m\in\N}$,
$(\sqrt{\lambda_m^{(n)}} \mathcal{F}_{Sm}^{(n)} )_{m\in\N}$ and  $(\sqrt{\lambda_m^{(0)}} \mathcal{F}_m^{(0)} )_{m\in\N}$  belong
to $\ell^2$.

\begin{lemma} \label{teo:Sequencesinl2}
    For every $\varphi \in C_c^\infty (]0,R[)$, the sequences
    \begin{displaymath}
    (\mathcal{F}_m^{(0)}\sqrt{\lambda_m^{(0)}} u_m^{(0)}\varphi)_{m\in\N}, \,
    (\mathcal{F}_{C_m}^{(n)} \sqrt{\lambda_m^{(n)}} u_m^{(n)}\varphi)_{m\in\N}, \,
    (\mathcal{F}_{S_m}^{(n)} \sqrt{\lambda_m^{(n)}}  u_m^{(n)}\varphi)_{m\in\N}
    \label{eq:coeffUniqueness}
    \end{displaymath}
    belong to the sequence space $\ell^2$.
\end{lemma}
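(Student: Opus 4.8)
The plan is to write each term of the three sequences as the product of a square-summable numerical factor and a factor that is uniformly bounded in $m$ on the support of $\varphi$, so that membership in $\ell^2$ follows at once. Reading the entries as functions of $\rho$ supported in $\mathrm{supp}\,\varphi$, I understand ``belonging to $\ell^2$'' as $\ell^2$-summability of their $L^2(]0,R[)$ norms (the same estimate will give summability pointwise in $\rho$ as well). The three cases are entirely analogous, so I concentrate on $(\mathcal{F}_m^{(0)}\sqrt{\lambda_m^{(0)}}\,u_m^{(0)}\varphi)_{m\in\N}$; the sequences $(\mathcal{F}_{Cm}^{(n)}\sqrt{\lambda_m^{(n)}}\,u_m^{(n)}\varphi)_{m\in\N}$ and $(\mathcal{F}_{Sm}^{(n)}\sqrt{\lambda_m^{(n)}}\,u_m^{(n)}\varphi)_{m\in\N}$ are handled identically, working with a fixed mode $n\geq1$ in place of $n=0$.

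First I would record the two ingredients already available. On one hand, the Parseval computation carried out just above the statement shows that $(\sqrt{\lambda_m^{(0)}}\mathcal{F}_m^{(0)})_{m\in\N}$, $(\sqrt{\lambda_m^{(n)}}\mathcal{F}_{Cm}^{(n)})_{m\in\N}$ and $(\sqrt{\lambda_m^{(n)}}\mathcal{F}_{Sm}^{(n)})_{m\in\N}$ belong to $\ell^2$; since $\lambda_m^{(n)}\geq\lambda_1^{(n)}>0$, a fortiori $(\mathcal{F}_m^{(0)})_{m\in\N}$, $(\mathcal{F}_{Cm}^{(n)})_{m\in\N}$, $(\mathcal{F}_{Sm}^{(n)})_{m\in\N}$ lie in $\ell^2$. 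On the other hand, I claim that there is a constant $B<+\infty$ with $|\sqrt{\lambda_m^{(0)}}\,u_m^{(0)}(\rho)|\leq B$ for all $m\in\N$ and all $\rho$ in the compact set $K\doteq\mathrm{supp}\,\varphi\subset\,]0,R[$.

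To prove the claim I would return to the Liouville reductions of Sections \ref{n=0} and \ref{n>0}. For $n=0$ the radial equation is transformed, through the $m$-independent change of variables \eqref{eq-14A1-4}--\eqref{eq-14A1-5} and the substitution $z=a\,u$, into the canonical problem \eqref{eq-14A2-1}--\eqref{eq-14A2-3}, whose eigenfunctions obey the asymptotics \eqref{eq-14A2-5}, so that the $z_m$ are uniformly bounded on $[0,1]$. Tracking the normalization \eqref{eq-orthonormalization}, which fixes $\langle u_m^{(0)},u_m^{(0)}\rangle_{\gamma,M}=1/\lambda_m^{(0)}$, through these transformations shows that $\sqrt{\lambda_m^{(0)}}\,u_m^{(0)}(\rho)$ coincides with $z_m(x(\rho))/a(x(\rho))$ up to a factor that is bounded and bounded away from zero uniformly in $m$; since $a$ is continuous and strictly positive on the image of $K$ (the endpoint singularity sitting at $\rho=0$, outside $K$), the uniform bound on $K$ follows. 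For $n\geq1$ the same argument applies, now using the reduction to \eqref{eq-15-4} and the eigenfunction asymptotics \eqref{eq-15-5-bis}; the singular potential $V$, with $|V(x)|\leq C/x$, is concentrated near $x=0$, i.e. near $\rho=0$, and hence does not affect the uniform bound on $K\subset\,]0,R[$.

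With the claim in hand the conclusion is immediate. Writing, for each $m$,
\begin{equation*}
\mathcal{F}_m^{(0)}\sqrt{\lambda_m^{(0)}}\,u_m^{(0)}(\rho)\varphi(\rho)=\mathcal{F}_m^{(0)}\bigl(\sqrt{\lambda_m^{(0)}}\,u_m^{(0)}(\rho)\varphi(\rho)\bigr),
\end{equation*}
the parenthesized factor is bounded by $B\,|\varphi(\rho)|$ uniformly in $m$, so that $\sum_{m}\bigl\|\mathcal{F}_m^{(0)}\sqrt{\lambda_m^{(0)}}u_m^{(0)}\varphi\bigr\|_{L^2(]0,R[)}^2\leq B^2\|\varphi\|_{L^2}^2\sum_m|\mathcal{F}_m^{(0)}|^2<+\infty$, the last series being finite by the first step. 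The identical estimate with $\mathcal{F}_{Cm}^{(n)}$, $\mathcal{F}_{Sm}^{(n)}$ and $u_m^{(n)}$ disposes of the other two sequences. The only genuinely delicate point is the uniform boundedness claim: it rests on combining the eigenfunction asymptotics of Fulton and Carlson with the specific normalization \eqref{eq-orthonormalization} and, crucially, on the fact that $\mathrm{supp}\,\varphi$ is a compact subset of $]0,R[$ kept away from the center $\rho=0$, where the model coefficients and the potential $V$ become singular.
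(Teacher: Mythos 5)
Your argument is correct, but it proves the lemma by a genuinely different route than the paper, and it also resolves an ambiguity in the statement differently. You read $u_m^{(n)}\varphi$ as the pointwise product of two functions of $\rho$; the paper's own proof shows that it means the scalar pairing $\langle u_m^{(n)},\varphi\rangle_{\gamma}$, since there the quantities $C_m=\lambda_m^{(n)}\,u_m^{(n)}\varphi$ are precisely the Fourier coefficients of $\varphi$ in the eigenbasis. With that reading the paper's proof is a two-line Hilbert-space argument: expand $\varphi=\sum_m C_m u_m^{(n)}$ and apply Parseval with the normalization \eqref{eq-orthonormalization}, $\|u_m^{(n)}\|^2=1/\lambda_m^{(n)}$, to get $\bigl(\sqrt{\lambda_m^{(n)}}\langle u_m^{(n)},\varphi\rangle\bigr)_{m\in\N}\in\ell^2$; the entries of the three sequences are then products of two $\ell^2$ sequences (one of which tends to zero), hence lie in $\ell^2$ (indeed in $\ell^1$), with no pointwise information about the eigenfunctions required. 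Your proof instead establishes the uniform sup-norm bound $\sup_{m}\sup_{K}|\sqrt{\lambda_m^{(n)}}u_m^{(n)}|<+\infty$ on the compact $K=\mathrm{supp}\,\varphi\subset\,]0,R[$ by tracking the normalization through the Liouville transformations and invoking the Fulton and Carlson asymptotics \eqref{eq-14A2-5}, \eqref{eq-15-5-bis}; this is heavier machinery and leans on the (standard, but worth stating) uniformity in $x$ of those asymptotic formulas and on the lower bound $\|z_m\|_{L^2}\geq c>0$ needed to control the normalizing constants. What it buys is a strictly stronger, localized conclusion: $\ell^2$-summability of the $L^2(K)$ norms of the products, which dominates the scalar pairings by Cauchy--Schwarz, so the paper's version of the lemma follows from yours. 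Both arguments are sound; the paper's is shorter and avoids eigenfunction asymptotics entirely, while yours would also serve in situations where one genuinely needs pointwise control on the observation window.
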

\begin{proof}
The test function $\varphi\in C_c^{\infty}(]0,R[)$ can be expressed
as the series $\varphi=\sum_{m\in\N} C_m u_{m}^{(n)}$, for all
$n\in\N\cup \{0\}$, where $C_{m}=\lambda_{m}^{(n)}
u_{m}^{(n)}\varphi$. Since $\varphi\in L^{2}(]0,R[)$,  we conclude
that $(u_{m}^{(n)}\varphi\sqrt{\lambda_m})_{m\in\N}\in \ell^2$,
$\forall n\in\N\cup \{0\}$.

Since we already know that the sequences $(\sqrt{\lambda_m^{(0)}}
\mathcal{F}_m^{(0)} )_{m\in\N}$, $(\sqrt{\lambda_m^{(n)}}
\mathcal{F}_{Cm}^{(n)} )_{m\in\N}$ and $(\sqrt{\lambda_m^{(n)}}
\mathcal{F}_{Sm}^{(n)} )_{m\in\N}$ all belong to $\ell^2$, the
conclusion of the lemma follows readily.
\end{proof}

\begin{lemma}
\label{teo:NonNullEigenvector}
Let $u^{(n)}_m$, $n\in \N\cup\{0\}$, $m\in \N$, be an
eigenfunction either of the problem
\eqref{eq-14-1}--\eqref{eq-14-3} or of the problem
\eqref{eq-15-1}--\eqref{eq-15-3}. For every $m,n$, there is no
open set $\Omega\subset [0,R]$ such that $u^{(n)}_m|_\Omega=0$.
\end{lemma}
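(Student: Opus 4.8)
The plan is to use the fact that both radial eigenvalue equations, \eqref{eq-14-1} for $n=0$ and \eqref{eq-15-1} for $n\geq 1$, are second-order linear ordinary differential equations that are regular on the open interval $]0,R[$. The leading coefficient is $C_\rho\overline{T}_\rho$, and by \eqref{eq-7-2} we have $\overline{T}_\rho(\rho)=\widehat{T}+\xi\mathcal{T}\rho$ with $\widehat{T}>0$, so $C_\rho\overline{T}_\rho(\rho)>0$ for every $\rho\in[0,R]$. Hence, dividing through, each equation can be written in normal form ${u^{(n)}_m}''=a(\rho){u^{(n)}_m}'+b(\rho)u^{(n)}_m$ with coefficients $a,b$ continuous on $]0,R[$. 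The only singular term, the Coulomb-type contribution $n^2 C_\vartheta\overline{T}_\vartheta/\rho$ present for $n\geq1$, is located at $\rho=0$ and therefore does not spoil regularity anywhere in $]0,R[$.

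First I would reduce the hypothesis to vanishing on an interval. If $\Omega\subset[0,R]$ is nonempty, open in the subspace topology, and $u^{(n)}_m|_\Omega=0$, then $\Omega$ contains a genuine open subinterval $]a,b[\,\subset\,]0,R[$ (this holds whether the chosen point of $\Omega$ is interior to $[0,R]$ or one of its endpoints). Thus $u^{(n)}_m\equiv0$ on $]a,b[$, and differentiating gives ${u^{(n)}_m}'\equiv0$ on $]a,b[$ as well. Fixing any $\rho_0\in\,]a,b[$ we obtain the homogeneous Cauchy data $u^{(n)}_m(\rho_0)=0$ and ${u^{(n)}_m}'(\rho_0)=0$.

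The key step is then the uniqueness theorem for the initial-value problem of a linear second-order ODE. Recasting the equation as a first-order linear system for the vector $(u^{(n)}_m,{u^{(n)}_m}')$ with continuous coefficient matrix on $]0,R[$, the solution determined by vanishing Cauchy data at $\rho_0$ is unique and coincides with the trivial solution on the whole connected interval of regularity $]0,R[$. Therefore $u^{(n)}_m\equiv0$ on $]0,R[$, and by continuity $u^{(n)}_m\equiv0$ on $[0,R]$, contradicting that $u^{(n)}_m$ is a (nonzero) eigenfunction. This contradiction proves the lemma.

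I do not expect any genuinely hard step here: the whole argument rests on the elementary unique-continuation property of regular linear ODEs. The only point deserving care is to confirm that the equation is regular on all of $]0,R[$ — that is, that $C_\rho\overline{T}_\rho$ stays strictly positive and that the $n\geq1$ singularity is confined to $\rho=0$ — so that the continuation from $\rho_0$ toward either endpoint is never obstructed and the standard ODE uniqueness result applies across the entire interval.
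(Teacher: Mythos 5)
Your proof is correct. Note that the paper itself states Lemma \ref{teo:NonNullEigenvector} without any proof, so there is no argument of the authors' to compare against; the unique-continuation property for regular linear second-order ODEs that you invoke is exactly the standard fact the authors are implicitly relying on. Your handling of the two delicate points is right: a nonempty relatively open subset of $[0,R]$ always contains a genuine open subinterval of $]0,R[$, and the leading coefficient $C_\rho\overline{T}_\rho=C_\rho(\widehat{T}+\xi\mathcal{T}\rho)$ is bounded away from zero while the Coulomb term $n^2C_\vartheta\overline{T}_\vartheta/\rho$ is singular only at $\rho=0$, so the equation in normal form has continuous coefficients on all of $]0,R[$ and the trivial solution is the unique one with vanishing Cauchy data at an interior point. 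The only step you pass over silently is that the eigenfunctions (obtained via the Liouville transformations and the Sturm--Liouville theory of Fulton and Carlson) are classical $C^2$ solutions on $]0,R[$, which is needed before the pointwise ODE uniqueness theorem applies; this is standard but worth a sentence.
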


We now provide a proof of the uniqueness result on identification
of distributed forces over the spider orb-web stated in Theorem
\ref{teo:Uniqueness}.

\begin{proof}[Proof of Theorem \ref{teo:Uniqueness}]
We rewrite the series expression \eqref{eq-18-3} of $U$, $U \in C^1 ([0,T_0], \mathcal{M})$, to emphasize its convolution form:
\begin{equation}
\label{eq:Convolution}
    U(\rho, \vartheta, t) = \int_0^t g(t-\tau)
    \mathcal{U}(\rho,\vartheta,\tau) \, d\tau,
\end{equation}
where, to simplify the notation, we have defined
\begin{multline*}
\mathcal{U}(\rho,\vartheta,\tau) = \left\{
\sum_{m=1}^{\infty} \mathcal{F}_m^{(0)}\sqrt{\lambda_m^{(0)}}\sin(\sqrt{\lambda_m^{(0)}} \tau) u_m^{(0)}(\rho)+\right.\\
\left. +\sum_{n=1}^{\infty}\sum_{m=1}^{\infty}
\sqrt{\lambda_m^{(n)}} \left( \mathcal{F}_{Cm}^{(n)}
\cos(n\vartheta) + \mathcal{F}_{Sm}^{(n)} \sin(n\vartheta) \right)
\sin(\sqrt{\lambda_m^{(n)}} \tau) u_m^{(n)}(\rho) \right\}.
\end{multline*}
Since the problem is linear, it suffices to prove that if $f \in
L^2 (]0,R[) \times ]0,2\pi[)$ is a distributed force in
\eqref{eq-9-2}--\eqref{eq-9-6}, then
\begin{displaymath}
    U_{f}=0 \quad \text{in $\omega$} \Rightarrow f\equiv 0.
\end{displaymath}
By hypothesis, there exists a set $ \omega=]\rho_1,\rho_2[\times
[0,2\pi[ \times [0,\tau_0[$ such that $U|_{\omega}=0$. Therefore,
\begin{displaymath}
    \label{Volterra}
    0=\frac{\partial U}{\partial t}= g(0)\mathcal{U}(\rho,\vartheta,t) +
    \int_0^t g^\prime(t-\tau) \mathcal{U}(\rho,\vartheta,\tau)\, d\tau,
\end{displaymath}
in $\omega$. Since $g(0)\neq 0$, \eqref{Volterra} is a Volterra
integral equation of the second kind, and then
\begin{equation}
\label{eq:FuncaoNula}
    \mathcal{U}(\rho,\vartheta,t)=0
\end{equation}
for $(\rho,\vartheta,t)\in \omega$.

Since $\{ 1, \vartheta\mapsto \sin(n\vartheta), \vartheta
\mapsto\cos(n\vartheta), \ n \in \mathbb{N} \}$ is an orthogonal
family of functions in $[0,2\pi[$, we conclude that for all
$(\rho,t)\in ]\rho_1,\rho_2[\times[0,\tau_0[$, we have
\begin{eqnarray}
\sum_{m=1}^{\infty}\mathcal{F}_m^{(0)}\sqrt{\lambda_m^{(0)}}\sin(\sqrt{\lambda_m^{(0)}} t) u_m^{(0)}(\rho)&=&0,
\label{eq:FirstUniqueness}\\
\sum_{m=1}^{\infty}\sqrt{\lambda_m^{(n)}} \mathcal{F}_{Cm}^{(n)}
\sin(\sqrt{\lambda_m^{(n)}} t) u_m^{(n)}(\rho)&=&0,\quad
\hbox{for every } n\in \mathbb{N},\label{eq:SecondUniqueness}\\
\sum_{m=1}^{\infty}\sqrt{\lambda_m^{(n)}}
 \mathcal{F}_{Sm}^{(n)}
\sin(\sqrt{\lambda_m^{(n)}} t) u_m^{(n)}(\rho)&=&0, \quad
\hbox{for every } n\in \mathbb{N}, \label{eq:ThirdUniqueness}
\end{eqnarray}
after multiplying \eqref{eq:FuncaoNula} in turn by the constant function $1$,
and by the functions $\sin(n\vartheta)$ and $\cos(n\vartheta)$, and integrating over
$]0,2\pi[$.

All three equations above can be extended oddly in a natural way
to the interval $]-\tau_0,\tau_0[$ and can be expressed in trigonometric
form $\sum_{n\in\mathbb{N}} A_n \mathrm{e}^{- i \mu_n t}$.
 We apply each of these equations to test functions
$\varphi\in C^{\infty}_c(]\rho_1,\rho_2[)$ in radial space variable.
The coefficients $A_n$ multiplying the trigonometric exponentials
become respectively
\begin{equation}
\label{eq:coeffUniqueness}
    \mathcal{F}_m^{(0)}\sqrt{\lambda_m^{(0)}} u_m^{(0)}\varphi, \,
     \mathcal{F}_{Cm}^{(n)} \sqrt{\lambda_m^{(n)}} u_m^{(n)}\varphi, \,
     \mathcal{F}_{Sm}^{(n)} \sqrt{\lambda_m^{(n)}}
     u_m^{(n)}\varphi.
\end{equation}

Putting $\mu_j=\sqrt{\lambda_j}$, from the asymptotic estimates
\eqref{eq-14-4} and  \eqref{eq-15-5}, we have that the sequence
$(\mu^{(n)}_j)_{j\in\N}$ is uniformly discrete, for every $n\in
\N\cup\{0\}$. Moreover, for every $n\in \N\cup\{0\}$, we have
$\mu_j=\mathcal{O}(j)$, for $j\rightarrow +\infty$.

By Lemma \ref{teo:LemaExtensao} and  Lemma
\ref{teo:Sequencesinl2}, the coefficients
\eqref{eq:coeffUniqueness}  multiplying the trigonometric
exponentials are all null.

Since no eigenfunction $u_m^{(n)}$ is identically null in
$]\rho_1,\rho_2[$, $\forall m\in \mathbb{N}$, $\forall n\in
\mathbb{N}\cup \{0\}$, the conclusion is that
$\mathcal{F}_m^{(0)}=0$, $\mathcal{F}_{Cm}^{(n)}=0$ and
$\mathcal{F}_{Sm}^{(n)}=0$, for every $n,m\in \mathbb{N}$, and
then, by using \eqref{eq-F1} and \eqref{eq-F2}, $f \equiv 0$.
\end{proof}

\section{Conclusions}
\label{Conclusions}

In this paper we have considered the inverse problem of
determining the position of a prey hitting the orb-web by
measurements of the dynamic response taken near the center of the
web, where the spider is supposed to stay. The orb web was
described by means of a continuous membrane model with a specific
fibrous structure to take into account the presence of radial and
circumferential threads.

For an axially-symmetric orb-web supported at the boundary and
undergoing infinitesimal transverse deformation, we proved a
uniqueness result for the spatial distribution of the loading in
terms of measurements of the dynamic response taken on an
arbitrarily small and thin ring centered at the origin of the web,
for a sufficiently large interval of time.

A forthcoming paper will be devoted to the numerical
implementation of a reconstruction method suggested by the present uniqueness result.

\section*{Acknowledgments}
The authors thank the financial support {}from Fapesp -- Sao Paulo
Research Foundation -- (Proc. $2017/07189-2$, Visiting Researcher
Program, and Proc. $2017/06452-1$). The second author gratefully acknowledge also the
support of the National Research Project PRIN $2015TT JN95$
'Identification and monitoring of complex structural systems'.

\bibliographystyle{plain}
\bibliography{References}

\begin{thebibliography}{10}

\bibitem{Carlson1994}
R.~Carlson.
\newblock {Inverse Sturm-Liouville problems with a singularity at zero}.
\newblock {\em Inverse Problems}, 10(4):851--864, 1994.

\bibitem{Cheng2002}
Jin Cheng, Guanghong Ding, and Masahiro Yamamoto.
\newblock {Uniqueness along a line for an inverse wave source problem}.
\newblock {\em Communications in Partial Differential Equations},
  27(9-10):2055--2069, 2002.

\bibitem{Frohlich1982}
Cliff Frohlich and Ruth~E. Buskirk.
\newblock {Transmission and attenuation of vibration in orb spider webs}.
\newblock {\em Journal of Theoretical Biology}, 95(1):13--36, 1982.

\bibitem{Fulton1977}
C.T. Fulton.
\newblock Two-point boundary value problems with eigenvalue parameter in the
  boundary conditions.
\newblock {\em Proceedings of the Royal Society of Edinburgh}, 77:293--308,
  1977.

\bibitem{Hergenroder1983}
Rita Hergenr{\"{o}}der and Friedrich~G. Barth.
\newblock {Vibratory signals and spider behavior: How do the sensory inputs
  from the eight legs interact in orientation?}
\newblock {\em Journal of Comparative Physiology -- A}, 152(3):361--371, 1983.

\bibitem{KM2018}
A.~Kawano and Morassi A.
\newblock Uniqueness in the determination of loads in multi-span beams and
  plates.
\newblock {\em European Journal of Applied Mathematics}, 30:176--195, 2019.

\bibitem{Kawano2011}
A~Kawano and A~Zine.
\newblock {Uniqueness and nonuniqueness results for a certain class of almost
  periodic distributions}.
\newblock {\em SIAM Journal of Mathematical Analysis}, 43(1):135--152, 2011.

\bibitem{Kawano2013b}
Alexandre Kawano.
\newblock {Uniqueness in the determination of vibration sources in rectangular
  Germain–Lagrange plates using displacement measurements over line segments
  with arbitrary small length}.
\newblock {\em Inverse Problems}, 29(8):085002, 2013.

\bibitem{Klarner1982}
D~Klarner and F~G Barth.
\newblock {Vibratory Signals and Prey Capture in Orb-Weaving Spiders
  (Zygiella-X-Notata, Nephila-Clavipes, Araneidae)}.
\newblock {\em Journal of Comparative Physiology A: Neuroethology, Sensory,
  Neural, and Behavioral Physiology}, 148(4):445--455, 1982.

\bibitem{Komornik2005a}
Vilmos Komornik and Masahiro Yamamoto.
\newblock {Estimation of point sources and applications to inverse problems}.
\newblock {\em Inverse Problems}, 21:2051--2070, 2005.

\bibitem{Landolfa1996}
M.~A. Landolfa and F.G. Barth.
\newblock {Vibrations in the orb web of the spider Nephila clavipes: cues for
  discrimination and orientation}.
\newblock {\em Journal of Comparative Physiology A}, 179:493--508, 1996.

\bibitem{Masters1984a}
W.~M. Masters.
\newblock {Vibrations in the orb-webs of Nuctenea sclopetaria (Araneidae): I.
  Prey and wind signals and the spider’s response threshold}.
\newblock {\em Behavioral Ecology and Sociobiology}, 15:217--223, 1984.

\bibitem{Masters1984}
W.~Mitch Masters.
\newblock {Vibrations in the orbwebs of Nuctenea sclopetaria (Araneidae) - I.
  Transmission through the web}.
\newblock {\em Behavioral Ecology and Sociobiology}, 15(3):207--215, 1984.

\bibitem{Morassi2017}
Antonino Morassi, Alejandro Soler, and Ram{\'{o}}n Zaera.
\newblock {A continuum membrane model for small deformations of a spider
  orb-web}.
\newblock {\em Mechanical Systems and Signal Processing}, 93:610--633, 2017.

\bibitem{Mortimer2016}
B.~Mortimer, A.~Soler, C.~R. Siviour, R.~Zaera, and F.~Vollrath.
\newblock {Tuning the instrument: sonic properties in the spider's web}.
\newblock {\em Journal of The Royal Society Interface}, 13(122):20160341, 2016.

\bibitem{Olevskii2009}
Alexander Olevskii and Alexander Ulanovskii.
\newblock {Interpolation in Bernstein and Paley–Wiener spaces}.
\newblock {\em Journal of Functional Analysis}, 256(10):3257--3278, 2009.

\bibitem{Otto2018}
A.W. Otto, D.O. Elias, and R.L. Hatton.
\newblock {\em Dynamics of Coupled Structures, Volume 4, Conference Proceedings
  of the Society for Experimental Mechanics Series}, chapter Modeling
  transverse vibration in spider webs using frequency-based dynamic
  substructuring, Chapter 12.
\newblock The Society for Experimental Mechanics, Inc., 2018.

\bibitem{Soler2016}
Alejandro Soler and Ram{\'{o}}n Zaera.
\newblock {The secondary frame in spider orb webs: The detail that makes the
  difference}.
\newblock {\em Scientific Reports}, 6, 2016.

\bibitem{Wirth1992}
Eckhard Wirth and Friedrich~G. Barth.
\newblock {Forces in the spider orb web}.
\newblock {\em Journal of Comparative Physiology A}, 171(3):359--371, 1992.

\bibitem{Yamamoto1995a}
Masahiro Yamamoto.
\newblock {Stability , reconstruction formula and regularization for an inverse
  source hyperbolic problem by a control method}.
\newblock {\em Inverse Problems}, 11:481--496, 1995.

\bibitem{Zaera2014}
R.~Zaera, A.~Soler, and J.~Teus.
\newblock {Uncovering changes in spider orb-web topology owing to aerodynamic
  effects}.
\newblock {\em Journal of The Royal Society Interface}, 11:20140484, 2014.

\end{thebibliography}

\end{document}